\documentclass[11pt,reqno]{amsart}

\usepackage{amsmath, amsthm, amstext, amssymb, amsfonts, graphicx, bm, subfigure}
\textheight=8.5truein \textwidth=6.25truein \hoffset=-.5truein
\voffset=-.5truein \footskip=24pt

\renewcommand{\L}{\mathcal{L}}

\newtheorem{theorem}{Theorem}[section]

\newtheorem{definition}[theorem]{Definition}
\newtheorem{lemma}[theorem]{Lemma}
\newtheorem{proposition}[theorem]{Proposition}
\newtheorem*{remark}{Remark}

\begin{document}

\title{A cluster expansion approach to renormalization group transformations}

\author{Mei Yin}
\address{Department of Mathematics, University of Texas, Austin, TX
78712, USA} \email{myin@math.utexas.edu}

\dedicatory{\rm May 19, 2011}

\begin{abstract}
The renormalization group (RG) approach is largely responsible for
the considerable success which has been achieved in developing a
quantitative theory of phase transitions. This work treats the
rigorous definition of the RG map for classical Ising-type lattice
systems in the infinite volume limit at high temperature. A
cluster expansion is used to justify the existence of the partial
derivatives of the renormalized interaction with respect to the
original interaction. This expansion is derived from the formal
expressions, but it is itself well-defined and convergent. Suppose
in addition that the original interaction is finite-range and
translation-invariant. We will show that the matrix of partial
derivatives in this case displays an approximate band property.
This in turn gives an upper bound for the RG linearization.
\end{abstract}

\maketitle

\section{Introduction}
\label{intro} We consider renormalization group (RG)
transformations for Ising-type lattice spin systems on
$\mathbb{Z}^d$. The spins in the original lattice $\L$ are denoted
by $\sigma$, whereas the block spins in the image lattice $\L'$
are denoted by $\sigma'$, and assumed to be of Ising-type also.
This assumption allows for treatment of many important RG
transformations such as decimation and majority rule, but is not
applicable for more general types such as block-average
transformations. $\L'$ indexes a partition of $\L$ into blocks,
all with the same cardinality $s$. Thus for each site $y$ in
$\L'$, there is a corresponding block $y^o$ that is a subset of
$\L$. Also, $\L$ is endowed with a metric $d$, and this naturally
induces a metric $d'$ on $\L'$.

Formally, the RG maps a Hamiltonian $H(\sigma)=-\sum_X
J(X)\sigma_X$ into a renormalized Hamiltonian $H'(\sigma')=-\sum_Y
J'(Y)\sigma'_Y$:
\begin{equation}
e^{-H'(\sigma')}=\sum_{\sigma}\prod_{y\in
\L'}T_y(\sigma,\sigma'_y)e^{-H(\sigma)},
\end{equation}
where $\sum_{\sigma}$ and $\sum_{\sigma'}$ (normalized sums)
denote the product probability measures on $\{+1,-1\}^{\L}$ and
$\{+1,-1\}^{\L'}$, respectively, and $T_y(\sigma, \sigma'_y)$
denotes a specific RG probability kernel, which depends only on
$\sigma$ through the block corresponding to $y$, and satisfies
both a symmetry condition,
\begin{equation}
\label{sym} T_y(\sigma, \sigma'_y)=T_y(-\sigma, -\sigma'_y),
\end{equation}
and a normalization condition,
\begin{equation}
\label{norm} \sum_{\sigma'}T_y(\sigma, \sigma'_y)=1
\end{equation}
for every $\sigma$ and every $y$. Notice that because of
(\ref{sym}) and (\ref{norm}),
\begin{equation}
\label{begin} \sum_{\sigma}T_y(\sigma,
+1)=\sum_{\sigma}T_y(\sigma, -1)=1.
\end{equation}
Equation (\ref{begin}) will be of fundamental importance for the
purpose of this paper (cf. proof of Proposition \ref{par}),
whereas assumptions (\ref{sym}) and (\ref{norm}), which lead to
(\ref{begin}), are not essential. This, however, would rule out RG
transformations where different block spins have unequal
occurrence probabilities. (Some references for this literature may
be found, for instance, in a series of lectures by Brydges
\cite{Brydges}.)

Our basic assumption is that the original interaction $J$ lies in
a Banach space $\mathcal{B}_r$, with norm
\begin{equation}
\label{J} ||J||_r=\sup_{x \in \L}\sum_{X: x \in X}|J(X)|e^{r|X|},
\end{equation}
where the constant $r>0$ and $|X|$ denotes the cardinality of the
set $X$. The properties of the RG transformation have been studied
extensively by mathematical physicists over a period of many
years. The first existence results of renormalized interactions in
trivial unique-phase regimes were obtained by Griffiths and Pearce
\cite{Griffiths} in the low density (or high magnetic field)
regime by cluster expansion methods. They also presented plausible
arguments showing that the RG transformation exhibits a rather
peculiar behavior at low temperatures. Robert Israel \cite{Israel}
justified the existence of the renormalized interaction $J'$ using
beautiful and ingenious techniques involving Banach algebras and
conditional expectations. Kashapov \cite{Kashapov} worked with
cumulants (semi-invariants), with estimates that relied on
combinatorial methods of Malyshev \cite{Malyshev}, and showed that
the RG map can be formalized rigorously in terms of the
Hamiltonian of a Gibbs field. Martinelli and Olivieri
\cite{Mar1}$^{,}$\cite{Mar2} investigated the stability and
instability of pathologies of RG transformations under decimation.
For block-average RG transformations, Cammarota \cite{Cammarota}
proved that the block spin interaction tends in norm to a one-body
quadratic potential in the infinite volume limit at high
temperature, whereas van Enter \cite{vanEnter1} constructed a
counterexample showing that the renormalized measure may not be
Gibbsian even when the temperature is above the critical
temperature.

These results are extended by the recent work \cite{Yin1} which
analyzes the spectrum of the RG maps corresponding to decimation
and majority rule at infinite temperature and discovers that it is
of an unusual kind: dense point spectrum for which the adjoint
operators have no point spectrum at all, but only residual
spectrum. The present investigation is a follow-up to my previous
work and explores various existence properties of the RG at high
temperature. It employs a reasonably straightforward application
of the cluster expansion machinery and justifies the existence of
the partial derivatives of the renormalized interaction $J'$ with
respect to the original interaction $J$ (Theorem \ref{expar}).
Under the additional assumption that the original interaction $J$
is finite-range and translation-invariant, it will be shown that
the matrix of partial derivatives displays an approximate band
property (Theorem \ref{band}). This in turn gives an upper bound
for the RG linearization (Theorem \ref{lin}).

The real interest of the RG is to define the transformation at
intermediate temperature, in particular, the critical temperature.
This is a considerably more difficult enterprise: one could worry
about the many issues raised by van Enter, Fern\'{a}ndez, and
Sokal \cite{vanEnter2}. Fortunately, there is some hope for
progress in this area due to the fact that the correlation length
of the constrained system relevant to the definition of the RG
transformation may well be finite, and may even sometimes be used
as a small parameter. Pioneering efforts were made by Olivieri and
his various collaborators \cite{Oliv}$^{,}$\cite{Olivieri}.
Another approach in a similar spirit was developed in the
important work of Kennedy \cite{Kennedy}. Additional references
may be found in
\cite{Aizenman}$^{,}$\cite{Faris}$^{,}$\cite{Fernandez}$^{,}$\cite{Lorinczi}.
Using similar techniques as in the present work \cite{Yin2}, it
may be shown that parallel results for RG linearization hold under
the condition proposed by Haller and Kennedy \cite{Haller}.

Another possible generalization of this work is in the context of
Potts models. Working with transmissivities rather than coupling
constants \cite{Tsallis}, results on spectral properties of the RG
map are expected in the high temperature regime through a more
involved application of the cluster expansion machinery.

\section{Cluster expansion}
\label{renormalized} This section gives cluster expansion
expressions which are valid for finite lattices.

\begin{proposition}
\label{expand} For every subset $W$ of the original lattice and
every subset $Z$ of the image lattice, the partial derivative
$\frac{\partial J'(Z)}{\partial J(W)}$ of the RG transformation is
given by the expression
\begin{equation}
\label{part} \frac{\partial J'(Z)}{\partial
J(W)}=\sum_{\sigma'}\sigma'_Z \frac{\sum_{\sigma}\prod_{y \in
\L'}T_{y}(\sigma,\sigma'_{y})e^{\sum_X
J(X)\sigma_X}\sigma_W}{\sum_{\sigma}\prod_{y \in
\L'}T_{y}(\sigma,\sigma'_{y})e^{\sum_X J(X)\sigma_X}}.
\end{equation}
\end{proposition}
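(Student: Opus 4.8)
The plan is to invert the renormalization relation to obtain an explicit formula for $J'(Z)$, and then to differentiate it directly. First I would record the purely algebraic fact that on the finite image lattice the spin monomials $\{\sigma'_Y\}_{Y\subseteq\L'}$ are orthonormal with respect to the normalized sum $\sum_{\sigma'}$, i.e.\ $\sum_{\sigma'}\sigma'_Y\sigma'_Z=\delta_{Y,Z}$. This holds because $\sigma'_Y\sigma'_Z=\sigma'_{Y\triangle Z}$ (using $(\sigma'_y)^2=1$) and $\sum_{\sigma'}\sigma'_A=0$ unless $A=\emptyset$. Expanding $-H'(\sigma')=\sum_Y J'(Y)\sigma'_Y$ and projecting onto $\sigma'_Z$ then yields the inversion formula $J'(Z)=\sum_{\sigma'}\sigma'_Z\,(-H'(\sigma'))$.

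Next I would substitute the defining RG equation. Writing $e^{-H(\sigma)}=e^{\sum_X J(X)\sigma_X}$ and abbreviating
\begin{equation}
F(\sigma')=\sum_{\sigma}\prod_{y\in\L'}T_y(\sigma,\sigma'_y)e^{\sum_X J(X)\sigma_X},
\end{equation}
the map gives $-H'(\sigma')=\log F(\sigma')$, so that $J'(Z)=\sum_{\sigma'}\sigma'_Z\log F(\sigma')$. Since the lattice is finite, $F(\sigma')$ is a finite sum of strictly positive terms, hence strictly positive, and the logarithm is a smooth function of the couplings $\{J(X)\}$, which I treat as independent variables.

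Then I would differentiate with respect to $J(W)$. Because every sum in sight is finite, I may interchange $\partial/\partial J(W)$ with both $\sum_{\sigma'}$ and $\sum_{\sigma}$. The coupling $J(W)$ enters only through the exponent, and $\frac{\partial}{\partial J(W)}e^{\sum_X J(X)\sigma_X}=\sigma_W\,e^{\sum_X J(X)\sigma_X}$. The chain rule applied to $\log F(\sigma')$ therefore produces
\begin{equation}
\frac{\partial}{\partial J(W)}\log F(\sigma')=\frac{1}{F(\sigma')}\sum_{\sigma}\prod_{y\in\L'}T_y(\sigma,\sigma'_y)e^{\sum_X J(X)\sigma_X}\sigma_W,
\end{equation}
and multiplying by $\sigma'_Z$ and summing over $\sigma'$ reproduces exactly (\ref{part}).

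The computation is routine once the inversion formula is in place, so the step requiring the most care is the first one: establishing $J'(Z)=\sum_{\sigma'}\sigma'_Z(-H'(\sigma'))$ with the correct normalization. This depends on reading $\sum_{\sigma'}$ as the \emph{normalized} (probability) sum, so that orthonormality holds with no stray factor of $2^{|\L'|}$; under any other convention the formula would be off by a constant. I expect no analytic obstacle, since finiteness of the lattice makes all sums finite and keeps $F(\sigma')$ bounded away from zero, guaranteeing that the logarithm and all partial derivatives are well defined.
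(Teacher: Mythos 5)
Your proposal is correct and follows essentially the same route as the paper: the orthonormality argument you spell out for the monomials $\sigma'_Y$ is exactly the ``Fourier series on the group $\{+1,-1\}^{\L'}$'' the paper invokes to obtain the inversion formula (\ref{JZ}), after which both you and the paper simply differentiate under the (finite) sums. Your added remarks on the normalized-sum convention and the positivity of $F(\sigma')$ just make explicit what the paper leaves implicit.
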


\begin{proof}
By the use of the Fourier series on the group $\{+1, -1\}^{\L'}$,
we see that the renormalized coupling constants $J'$ are given by
\begin{equation}
\label{JZ} J'(Z)=\sum_{\sigma'}\sigma'_Z
\log\left(\sum_{\sigma}\prod_{y \in
\L'}T_{y}(\sigma,\sigma'_{y})e^{\sum_X J(X)\sigma_X}\right).
\end{equation}
We then take the derivative of both sides of (\ref{JZ}) with
respect to $J(W)$.
\end{proof}

To understand the following Proposition II.2, we need to introduce
some combinatorial concepts. A hypergraph is a set of sites
together with a collection $\Gamma$ of nonempty subsets. Such a
nonempty set is referred to as a hyper-edge or link. Two links are
block-connected if they both intersect some block. The support of
a hypergraph is the set $\cup \Gamma$ of sites that belong to some
set in $\Gamma$. A hypergraph $\Gamma$ is block-connected if the
support of $\Gamma$ is nonempty and cannot be partitioned into
nonempty sets with no block-connected links. In our current
setting, a subset $X$ of $\mathcal{L}$ defines a subset $X'$ of
$\mathcal{L'}$, corresponding to the set of blocks that have
non-empty intersection with $X$. Thus a hypergraph $\Gamma$ on
$\mathcal{L}$ defines a hypergraph $\Gamma'$ on $\mathcal{L'}$. We
use $\Gamma_c$ to indicate block connectivity of the hypergraph
$\Gamma_c$, and write $\Gamma_c^*=\cup \Gamma'_c$ for the support
of $\Gamma'_c$ in the image lattice.

\begin{proposition}
\label{par} Let $W(\sigma')$ be the frozen-block-spin partition
function
\begin{equation}
W(\sigma')=\sum_{\sigma}\prod_{y \in
\L'}T_{y}(\sigma,\sigma'_{y})e^{\sum_X J(X)\sigma_X}.
\end{equation}
For fixed values of the renormalized spins $\sigma'$, $W(\sigma')$
has the cluster representation
\begin{equation}
\label{tt} W(\sigma')=\sum_{\Delta}\prod_{N\in \Delta}w_N,
\end{equation}
where $\Delta$ is a set of disjoint subsets $N$'s of $\L'$, and
\begin{equation}
\label{w} w_N=\sum_{\Gamma_c^*=N}\alpha(N, \Gamma_c,
\{\sigma'\}_N),
\end{equation}
and the sum here is over block-connected hypergraphs $\Gamma_c$ on
$\mathcal{L}$ whose images in $\mathcal{L'}$ have support $N$. The
contribution of each block-connected hypergraph is given by
\begin{equation}
\label{alpha} \alpha(N, \Gamma_c,
\{\sigma'\}_N)=\sum_{\sigma}\prod_{y\in
N}T_y(\sigma,\sigma'_y)\prod_{X\in \Gamma_c}
\left(e^{J(X)\sigma_{X}}-1\right).
\end{equation}
\end{proposition}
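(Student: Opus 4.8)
The plan is to carry out a Mayer-type cluster expansion, with the normalization identity (\ref{begin}) doing the essential work of eliminating every block that no link touches. Since the lattice is finite, all the sums below are finite and every interchange is purely algebraic. I would begin by writing the Boltzmann weight as a product over all subsets $X$ of $\L$ and expanding,
\[
e^{\sum_X J(X)\sigma_X}=\prod_X\Bigl(1+\bigl(e^{J(X)\sigma_X}-1\bigr)\Bigr)=\sum_{\Gamma}\prod_{X\in\Gamma}\bigl(e^{J(X)\sigma_X}-1\bigr),
\]
where the sum runs over all hypergraphs $\Gamma$ on $\L$ (collections of links), the empty hypergraph contributing the term $1$. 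Substituting into the definition of $W(\sigma')$ and interchanging the sum over $\Gamma$ with the sum over $\sigma$ yields
\[
W(\sigma')=\sum_{\Gamma}\;\sum_{\sigma}\prod_{y\in\L'}T_y(\sigma,\sigma'_y)\prod_{X\in\Gamma}\bigl(e^{J(X)\sigma_X}-1\bigr).
\]

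Next I would decompose each $\Gamma$ into its block-connected components $\Gamma_c^{(1)},\dots,\Gamma_c^{(k)}$, that is, the equivalence classes under the transitive closure of block-connectivity, and set $N_j=(\Gamma_c^{(j)})^*$. The key geometric fact to check is that distinct components have disjoint image supports: were some block $y$ common to $N_i$ and $N_j$ with $i\neq j$, each component would contain a link meeting the block $y^o$, and those two links would be block-connected, forcing the two components to coincide. Because the normalized sum over $\sigma$ is a product measure over the blocks, and the integrand depends on the spins of block $y$ only through $T_y$ and through the links that meet $y^o$, the $\sigma$-sum then factorizes over the disjoint groups $N_1,\dots,N_k$ together with the blocks touched by no link. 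On each untouched block $y$ the factor is exactly $\sum_{\sigma}T_y(\sigma,\sigma'_y)=1$ by (\ref{begin})---this is the step that genuinely requires the normalization---while each group $N_j$ contributes precisely $\alpha(N_j,\Gamma_c^{(j)},\{\sigma'\}_{N_j})$ as defined in (\ref{alpha}). Hence the inner sum equals $\prod_{j}\alpha(N_j,\Gamma_c^{(j)},\{\sigma'\}_{N_j})$.

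Finally I would reorganize the outer sum. Specifying a hypergraph $\Gamma$ is the same as specifying a family $\Delta=\{N_1,\dots,N_k\}$ of pairwise disjoint subsets of $\L'$ together with, for each $N_j$, a block-connected hypergraph $\Gamma_c^{(j)}$ whose image support is exactly $N_j$; conversely any such data recombine to a single $\Gamma$ whose components are the $\Gamma_c^{(j)}$, precisely because the $N_j$ are block-disjoint. Summing the factorized expression over this bijection and grouping the block-connected hypergraphs by their supports collapses the component sum into $w_N=\sum_{\Gamma_c^*=N}\alpha(N,\Gamma_c,\{\sigma'\}_N)$ and produces (\ref{tt}).

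The main obstacle is the middle step: establishing simultaneously that distinct block-connected components occupy disjoint blocks and that the untouched blocks drop out under (\ref{begin}), so that the $\sigma$-sum factorizes cleanly. Once that is in hand, the passage to $\sum_{\Delta}\prod_{N}w_N$ is a bookkeeping matter of verifying that the component decomposition is a bijection onto disjoint families of supports each carrying a block-connected hypergraph.
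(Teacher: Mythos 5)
Your proposal is correct and follows essentially the same route as the paper: expand the Boltzmann factor as a Mayer sum over hypergraphs, factor the product-measure $\sigma$-sum over block-connected components with the untouched blocks eliminated by the normalization identity, and recombine via the bijection between hypergraphs and families $\{(N,\Gamma_c)\}$ with disjoint supports (the paper's distributive-law step). Your explicit verification that distinct components have disjoint image supports is a point the paper leaves implicit, but it is the same argument.
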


\begin{proof}
When the original Hamiltonian $H$ is at high temperature
($||J||_r$ small), we can rewrite $e^{\sum_X J(X)\sigma_X}$ as a
perturbation around zero interaction (infinite temperature),
\begin{eqnarray}
\label{t} W(\sigma')&=&\sum_{\sigma}\prod_{y\in
\mathcal{L'}}T_y(\sigma,\sigma'_y)\prod_X
\left(1+e^{J(X)\sigma_X}-1\right)\notag\\&=&\sum_{\sigma}\prod_{y\in
\mathcal{L'}} T_y(\sigma,\sigma'_y)\sum_{\Gamma}\prod_{X\in
\Gamma} \left(e^{J(X)\sigma_{X}}-1\right),
\end{eqnarray}
where $\Gamma$ is a set of subsets $X$'s of $\mathcal{L}$.

We are going to organize the sum over hypergraphs in (\ref{t}) in
the following way. Each hypergraph $\Gamma$ on $\mathcal{L}$ has a
support in $\mathcal{L'}$, which breaks up into block-connected
parts. Let $\Delta$ be the parts, and for $N\in \Delta$, let
$S(N)$ be the corresponding block-connected hypergraph on this
part, i.e., $S(N)^*=N$. Then summing over hypergraphs $\Gamma$ is
equivalent to summing over $\Delta$ and functions $S$ with the
appropriate property. Furthermore, the product over $N$ in
$\Delta$ and the links in $S(N)$ is equivalent to the product over
the corresponding $\Gamma$. We have
\begin{equation}
W(\sigma')=\sum_{\sigma}\prod_{y\in
\mathcal{L'}}T_y(\sigma,\sigma'_y)\sum_{\Delta}\sum_S\prod_{N \in
\Delta}\prod_{X\in S(N)}\left(e^{J(X)\sigma_{X}}-1\right).
\end{equation}
By independence, the sum over $\sigma$ can be factored over
$\Delta$, and this gives
\begin{equation}
\label{comp} W(\sigma')=\sum_{\Delta}\prod_{y \notin \cup
\Delta}\sum_{\sigma}T_y(\sigma,\sigma'_y)\sum_S \prod_{N \in
\Delta}\sum_{\sigma}\prod_{y\in N}T_y(\sigma,\sigma'_y)\prod_{X\in
S(N)} \left(e^{J(X)\sigma_{X}}-1\right).
\end{equation}
Notice that because of (\ref{begin}), many of the $T_y$ factors
sum to $1$, (\ref{comp}) can be simplified,
\begin{equation}
W(\sigma')=\sum_{\Delta}\sum_S \prod_{N \in \Delta}\alpha(N, S(N),
\{\sigma'\}_N).
\end{equation}
And by the distributive law,
\begin{equation}
\sum_S \prod_{N \in \Delta}\alpha(N, S(N),
\{\sigma'\}_N)=\prod_{N\in \Delta}\sum_{\Gamma_c^*=N}\alpha(N,
\Gamma_c, \{\sigma'\}_N).
\end{equation}
Therefore
\begin{eqnarray}
\label{eq} W(\sigma')&=&\sum_{\Delta}\prod_{N\in
\Delta}\sum_{\Gamma_c^*=N}\alpha(N, \Gamma_c, \{\sigma'\}_N).
\end{eqnarray}
Our claim thus follows.
\end{proof}

We rewrite (\ref{tt}) in the following way to apply standard
results on cluster expansion,
\begin{eqnarray}
\sum_{\Delta}\prod_{N\in \Delta} w_N\notag
&=&\sum_{p=0}^{\infty}\frac{1}{p!}\sum_{N_1,...,N_p}\prod_{\{i,j\}}\left(1-c(N_i,N_j)\right)w_{N_1}\cdots w_{N_p}\notag\\
&=&\sum_{p=0}^{\infty}\frac{1}{p!}\sum_{N_1,...,N_p}\sum_{G}\prod_{\{i,j\}\in
G}\left(-c(N_i,N_j)\right)w_{N_1}\cdots w_{N_p},
\end{eqnarray}
where $G$ is a graph with vertex set $\{1,...,p\}$ and
\begin{eqnarray}
\label{c} c(N_i,N_j)=\left\{\begin{array}{ll}
1 & \mbox{if $N_i$ and $N_j$ overlap};\\
0 & \mbox{otherwise}.\end{array} \right.
\end{eqnarray}

\begin{proposition}
The frozen-block-spin free energy $\log(W(\sigma'))$ is given by
the cluster expansion
\begin{equation}
\log(W(\sigma'))=\sum_{p=1}^{\infty}\frac{1}{p!}\sum_{N_1,...,N_p}C\left(N_1,...,N_p\right)w_{N_1}\cdots
w_{N_p},
\end{equation}
where
\begin{equation}
\label{C} C\left(N_1,...,N_p\right)=\sum_{G_c}\prod_{\{i,j\}\in
G_c}\left(-c(N_i,N_j)\right),
\end{equation}
and $G_c$ is a connected graph with vertex set $\{1,...,p\}$.
\end{proposition}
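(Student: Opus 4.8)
The plan is to establish the exponential relation
\[
W(\sigma')=\exp\!\Big(\sum_{p\ge 1}\frac{1}{p!}\sum_{N_1,\dots,N_p} C(N_1,\dots,N_p)\,w_{N_1}\cdots w_{N_p}\Big)
\]
and then take logarithms; equivalently, to show that the logarithm of the polymer partition function is generated by the connected graphs. I would start from the second line of the expansion already displayed above, namely
\[
W(\sigma')=\sum_{p=0}^{\infty}\frac{1}{p!}\sum_{N_1,\dots,N_p}\sum_{G}\prod_{\{i,j\}\in G}\big(-c(N_i,N_j)\big)\,w_{N_1}\cdots w_{N_p},
\]
where the $p=0$ term is read as $1$ and $G$ ranges over all graphs on the vertex set $\{1,\dots,p\}$.

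The key structural step is to organize the sum over graphs by connected components. Every graph $G$ on $\{1,\dots,p\}$ determines a partition $\pi=\{B_1,\dots,B_k\}$ of $\{1,\dots,p\}$ into the vertex sets of its connected components together with a connected graph on each block $B_l$, and conversely this data recovers $G$. Writing $w(B)=\prod_{i\in B}w_{N_i}$ and letting $C(B)$ denote the connected-graph sum of (\ref{C}) restricted to the indices in $B$, this decomposition factorizes the summand, so that
\[
\sum_{G}\prod_{\{i,j\}\in G}\big(-c(N_i,N_j)\big)\,w_{N_1}\cdots w_{N_p}=\sum_{\pi\vdash\{1,\dots,p\}}\prod_{B\in\pi}C(B)\,w(B).
\]
Thus the full weight is the set-partition sum of the connected weights.

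It then remains to recognize this as the exponential formula. I would expand $\exp\!\big(\sum_{q\ge 1}\frac{1}{q!}\sum_{M_1,\dots,M_q}C(M_1,\dots,M_q)\,w_{M_1}\cdots w_{M_q}\big)$ as a power series in the connected contributions, collect all terms whose blocks have total cardinality $p$, and check that the multinomial bookkeeping---the $1/k!$ from the exponential, the $1/q!$ per factor, and the multinomial coefficient counting the ways to distribute $p$ labels into $k$ blocks of prescribed sizes---combines to produce exactly the factor $1/p!$ together with the sum over all set partitions $\pi$ of $\{1,\dots,p\}$. Matching this against the displayed factorization shows that the exponential equals $W(\sigma')$, and taking logarithms yields the claimed formula with $C(N_1,\dots,N_p)$ given by (\ref{C}).

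The main obstacle is not algebraic but analytic: all of these rearrangements---interchanging the sums over $p$, over tuples, and over partitions, and above all passing between the series and its exponential---are legitimate only if the relevant series converge absolutely, and this is precisely where the high-temperature hypothesis $\|J\|_r$ small is needed. I would control the activities $w_N$ through the bound implicit in (\ref{w})--(\ref{alpha}), using $|e^{J(X)\sigma_X}-1|\le |J(X)|e^{|J(X)|}$ together with the norm (\ref{J}) to obtain an estimate guaranteeing a convergence criterion of Koteck\'y--Preiss type, say $\sum_{N\ni y}|w_N|\,e^{a|N|}\le\varepsilon$ with $\varepsilon$ small, uniformly in $y$ and in $\sigma'$. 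Once such a criterion holds, the standard polymer cluster-expansion theorem guarantees both absolute convergence of the connected series and the validity of the exponentiation, so the formal identity above becomes rigorous. Verifying this bound, and in particular tracking the block-connectivity constraint through the sum over $\Gamma_c$ defining $w_N$, is the part that requires genuine care.
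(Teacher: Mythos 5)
Your proof is correct and takes essentially the same route as the paper: the paper's entire proof is the one-line observation that taking the logarithm replaces the sum over all graphs by the sum over connected graphs, which is exactly the exponential formula whose component decomposition and multinomial bookkeeping you spell out in detail. Your closing concern about absolute convergence is legitimate but is deliberately deferred in the paper, which presents this proposition as a formal finite-lattice identity and supplies the analytic justification only in Section 3 via the Koteck\'{y}--Preiss criterion and the bound $\sum_{y\in N}|w_N|M^{|N|}\leq\log(M)$, much as you propose.
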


\begin{proof}
The effect of taking the logarithm is that the sum over graphs is
replaced by the sum over connected graphs:
\begin{equation}
\log(W(\sigma'))=
\sum_{p=1}^{\infty}\frac{1}{p!}\sum_{N_1,...,N_p}\sum_{G_c}\prod_{\{i,j\}\in
G_c}\left(-c(N_i,N_j)\right)w_{N_1}\cdots w_{N_p}.
\end{equation}
Our claim thus follows.
\end{proof}

\section{Existence of the partial derivatives}
From now on, we work in the infinite volume limit of the lattice
system. Following standard interpretation of statistical
mechanics, our results hold when the original interaction $J$ is
at high temperature.

\label{derivative}
\begin{theorem}[Koteck\'{y}-Preiss]
\label{KP} Recall (\ref{c}) and (\ref{C}). Take $1<M<e^r$. Suppose
that
\begin{equation}
\label{conv} \sum_{N'}c(N, N')|w_{N'}|M^{|N'|}\leq |N|\log(M).
\end{equation}
Then the avoidance probability for every $Y\subset \L'$ has a
convergent power series expansion,
\begin{equation*}
\left|\sum_{\Delta'}\prod_{N\in\Delta'}w_N/\sum_{\Delta}\prod_{N\in
\Delta} w_N\right|
\end{equation*}
\begin{eqnarray*}
=\left|\exp\left(-\sum_{p=1}^{\infty}\frac{1}{p!}\sum_{N_1,...,N_p}C\left(N_1,...,N_p\right)c(Y,
\cup_1^p N_i)w_{N_1}\cdots w_{N_p}\right)\right|
\end{eqnarray*}
\begin{eqnarray}
\label{avoid} \leq
\exp\left(\sum_{p=1}^{\infty}\frac{1}{p!}\sum_{N_1,...,N_p}|C\left(N_1,...,N_p\right)|c(Y,
\cup_1^p N_i)|w_{N_1}|\cdots |w_{N_p}|\right)\leq M^{|Y|},
\end{eqnarray}
where $\Delta'$ is a set of disjoint subsets of $\L'
\texttt{\char92} Y$, and $\Delta$ is a set of disjoint subsets of
$\L'$. Notice that here we are only counting contributions of
block-connected $N_i$'s that are also block-connected to $Y$.
\end{theorem}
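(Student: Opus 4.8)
The plan is to read the three relations in (\ref{avoid}) as an identity, an elementary modulus estimate, and the genuine Koteck\'{y}--Preiss bound, and to handle them in that order. First I would obtain the opening equality by subtracting two cluster expansions. The numerator $\sum_{\Delta'}\prod_{N\in\Delta'}w_N$ is the partition function $W(\sigma')$ with the polymers constrained to lie in $\L'\setminus Y$, so the cluster expansion of $\log(W(\sigma'))$ established above applies to it verbatim except that every cluster $(N_1,\dots,N_p)$ reaching $Y$ is dropped. Hence the logarithm of the ratio equals minus the sum over exactly those clusters that do reach $Y$, which is what the factor $c(Y,\cup_1^p N_i)$ records; exponentiating gives the displayed identity. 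The middle inequality is then immediate: writing the exponent as $-S$, we have $|e^{-S}|=e^{-\mathrm{Re}\,S}\le e^{|S|}$, and since each $c\ge0$ the triangle inequality bounds $|S|$ termwise by replacing every $C(N_1,\dots,N_p)$ and every $w_{N_i}$ by its modulus.

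The substance is the final inequality,
\[
\sum_{p=1}^{\infty}\frac{1}{p!}\sum_{N_1,\dots,N_p}|C(N_1,\dots,N_p)|\,c(Y,\cup_1^p N_i)\,|w_{N_1}|\cdots|w_{N_p}|\le|Y|\log M.
\]
Writing $a(N)=|N|\log M$, the hypothesis (\ref{conv}) reads $\sum_{N'}c(N,N')|w_{N'}|e^{a(N')}\le a(N)$, the factor $e^{a(N')}=M^{|N'|}$ playing the role of an inductive budget. The first key input is the tree-graph (Penrose/Rota) inequality for the hard-core weight (\ref{c}): because $C(N_1,\dots,N_p)$ is the alternating sum (\ref{C}) over connected graphs, its modulus is dominated by the trees compatible with the overlap pattern,
\[
|C(N_1,\dots,N_p)|\le\sum_{\tau}\prod_{\{i,j\}\in\tau}c(N_i,N_j),
\]
the sum running over trees $\tau$ on the vertex set $\{1,\dots,p\}$.

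With this reduction the estimate becomes a rooted-tree sum, which I would control by the standard Koteck\'{y}--Preiss induction. For a polymer $N$ define
\[
\xi(N)=\sum_{\text{trees rooted at }N}\;\prod_{v}|w_{N_v}|\prod_{\{u,v\}}c(N_u,N_v),
\]
the sum being over all finite trees with a distinguished root labelled $N$ and the remaining vertices $v$ labelled by polymers $N_v$. Grouping such trees by the unordered children of the root turns the cluster-expansion factor $1/p!$ into the recursion
\[
\xi(N)=|w_N|\exp\Big(\sum_{N'}c(N,N')\,\xi(N')\Big),
\]
and an induction on the tree depth, substituting the bound for the subtrees and invoking (\ref{conv}) in the form $\sum_{N'}c(N,N')\xi(N')\le\sum_{N'}c(N,N')|w_{N'}|e^{a(N')}\le a(N)$, yields $\xi(N)\le|w_N|e^{a(N)}$. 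Rooting each cluster that reaches $Y$ at one of its polymers block-connected to $Y$ (overcounting only helps the inequality) then bounds the left side above by $\sum_{N}c(Y,N)\xi(N)\le\sum_{N}c(Y,N)|w_N|e^{a(N)}\le a(Y)=|Y|\log M$, and $\exp(|Y|\log M)=M^{|Y|}$ supplies the last inequality of (\ref{avoid}).

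The main obstacle is the reorganization that replaces the Ursell coefficient by this rooted-tree recursion: one must combine the tree-graph inequality with the exponential resummation so that the cluster-expansion factor $1/p!$ is matched exactly, and then check that the induction closes with the sharp constant $a(Y)$ rather than a larger quantity such as $e^{a(Y)}-1$. This is precisely why one roots at a polymer block-connected to $Y$ instead of treating $Y$ as a tree vertex, and why (\ref{conv}) is tuned so that the weight $M^{|N'|}$ coincides with the budget $e^{a(N')}$. The standing restriction $1<M<e^r$ enters only to guarantee, through the norm $\|J\|_r$, that the weights $|w_N|M^{|N|}$ are finite and that all the rearranged series converge absolutely.
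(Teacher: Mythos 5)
Your proposal is correct, but there is nothing in the paper to compare it against: the paper does not prove this statement at all, quoting it as a black-box theorem with the citation \cite{Kotecky}, the hypothesis (\ref{conv}) being exactly the Koteck\'{y}--Preiss criterion with weight $M^{|N|}=e^{a(N)}$, $a(N)=|N|\log(M)$. What you have reconstructed is the standard modern proof, which is in fact different in mechanism from the original Koteck\'{y}--Preiss argument (an induction over finite volumes on ratios of partition functions, which never touches the Ursell coefficients); your route via the Penrose/Rota tree-graph inequality plus the rooted-tree recursion is the style found, e.g., in Faris's survey \cite{Faris}, and it buys a more transparent accounting of where the budget $e^{a(N)}$ enters, at the cost of needing the tree-graph bound as a separate input. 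Your three-step split is right: the opening equality is the cancellation, between the two logarithms, of all clusters avoiding $Y$ (legitimate since the restricted polymer system inherits (\ref{conv}) and both expansions converge absolutely), the middle bound is just $|e^{-S}|\leq e^{|S|}$ with the triangle inequality, and the last bound carries the substance. Two points in your sketch should be made explicit in a full write-up: first, the passage from the indicator $c(Y,\cup_1^p N_i)$ to a rooted sum uses $c(Y,\cup_1^p N_i)\leq \#\{i: c(Y,N_i)=1\}$ together with the symmetry of $C$ to fix the root at $i=1$, converting $1/p!$ into $1/(p-1)!$, which is precisely what matches the labelled rooted-tree count in your $\xi(N)$; second, the identity $\xi(N)=|w_N|\exp\bigl(\sum_{N'}c(N,N')\xi(N')\bigr)$ is formal, so the induction should be run on depth-$d$ truncations $\xi^{(d)}$, with $\xi^{(d)}(N)\leq |w_N|e^{a(N)}$ closing via (\ref{conv}) and monotone convergence then covering $\xi$ itself. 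Finally, the concluding application of (\ref{conv}) with $Y$ in place of $N$ (or, equivalently, summing $c(Y,N')\leq\sum_{y\in Y}c(\{y\},N')$ over the sites of $Y$) gives $a(Y)=|Y|\log(M)$ and hence the bound $M^{|Y|}$; with these details filled in, your argument is complete and supplies a proof the paper itself omits.
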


\begin{proposition}
\label{basic} Take $1<M<e^r$. Consider the original coupling
constants $J$ with the Banach space norm $||J||_r$. Suppose $J$ is
at high temperature ($||J||_r$ small),
\begin{equation}
\label{Jr}||J||_r \leq
\frac{\log(M)c^2}{2s\left(c+\log(M)\right)},
\end{equation}
where $c=\frac{1}{\sqrt{\epsilon}}-1$ and $\epsilon=Me^{-r}$. Then
for each block spin site $y$, we have
\begin{equation}
\label{ineq} \sum_{y\in N}|w_N| M^{|N|}\leq \log(M).
\end{equation}
\end{proposition}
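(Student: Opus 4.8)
The plan is to verify the Koteck\'y--Preiss one-point condition (\ref{ineq}) by a direct estimate on the polymer activities $w_N$, summing over the block-connected hypergraphs that generate them. First I would bound a single activity. From (\ref{w}) and (\ref{alpha}),
\[
|w_N| \le \sum_{\Gamma_c^* = N}\sum_{\sigma}\prod_{y\in N}T_y(\sigma,\sigma'_y)\prod_{X\in\Gamma_c}\left|e^{J(X)\sigma_X}-1\right|.
\]
Since $\sigma_X=\pm1$, one has $|e^{J(X)\sigma_X}-1|\le e^{|J(X)|}-1$ uniformly in $\sigma$, and since the kernels factor over disjoint blocks, (\ref{begin}) gives $\sum_\sigma\prod_{y\in N}T_y(\sigma,\sigma'_y)=1$; this is exactly where the normalization (\ref{begin}) is used, as promised after its statement. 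Hence $|w_N|\le\sum_{\Gamma_c^*=N}\prod_{X\in\Gamma_c}(e^{|J(X)|}-1)$, a bound independent of the frozen spins $\sigma'$.

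Next I would pass from the sum over $N\ni y$ to a sum over block-connected hypergraphs directly. Each block-connected $\Gamma_c$ determines $N=\Gamma_c^*$ with $|N|=|\Gamma_c^*|\le\sum_{X\in\Gamma_c}|X'|\le\sum_{X\in\Gamma_c}|X|$, so that $M^{|N|}\le\prod_{X\in\Gamma_c}M^{|X|}$ because $M>1$. Therefore
\[
\sum_{y\in N}|w_N|M^{|N|}\le\sum_{\substack{\Gamma_c\ \text{block-connected}\\ y\in\Gamma_c^*}}\ \prod_{X\in\Gamma_c}M^{|X|}\left(e^{|J(X)|}-1\right).
\]
Writing $M=\epsilon e^r$ with $\epsilon=Me^{-r}\in(0,1)$ gives $M^{|X|}=\epsilon^{|X|}e^{r|X|}$, and with $e^{|J(X)|}-1\le |J(X)|e^{|J(X)|}$ each link carries the weight $\epsilon^{|X|}e^{|J(X)|}\,|J(X)|e^{r|X|}$. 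The factor $|J(X)|e^{r|X|}$ is precisely what the norm (\ref{J}) controls, so the single-site activity $\sum_{X\ni x}M^{|X|}(e^{|J(X)|}-1)$ is bounded by a constant multiple of $\|J\|_r$, the constant being governed by $\epsilon<1$; the block size $s$ will enter because each block holds $s$ such sites.

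The remaining task, and the crux, is to sum the product weights over all block-connected hypergraphs whose image support contains the fixed block $y$. I would root the sum at block $y$ and grow the hypergraph link by link, maintaining block-connectivity; each block reached opens $s$ potential gateways for further links. The key device is to split $\epsilon^{|X|}=\epsilon^{|X|/2}\cdot\epsilon^{|X|/2}$: one half is absorbed into the $\|J\|_r$-activity above, while the other half tames the entropy of attachments through the geometric series $\sum_{k\ge1}\epsilon^{k/2}=1/c$, which is exactly how $c=\epsilon^{-1/2}-1$ enters. Carrying out this bookkeeping should produce the explicit estimate
\[
\sum_{y\in N}|w_N|M^{|N|}\le\frac{2s(c+\log M)}{c^2}\,\|J\|_r,
\]
and hypothesis (\ref{Jr}) is exactly the requirement that the right-hand side not exceed $\log(M)$, giving (\ref{ineq}).

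The hard part will be the combinatorial control in the last step: bounding the entropy of block-connected hypergraphs rooted at $y$ while simultaneously accounting for the block-count factor $M^{|N|}$, the block size $s$, and the per-link activities, and organizing all of it into a single convergent series with the stated ratio. Everything before it---the activity bound and the interchange of summations---is routine once (\ref{begin}) is invoked; the care lies in not overcounting the hypergraphs and in matching the convergence factor to $c$ so that the final constant emerges as in (\ref{Jr}).
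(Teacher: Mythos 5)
Your setup is sound and matches the paper's opening moves: the activity bound $|w_N|\le\sum_{\Gamma_c^*=N}\prod_{X\in\Gamma_c}(e^{|J(X)|}-1)$ via the normalization (\ref{begin}), the inequality $|N|\le\sum_{X\in\Gamma_c}|X|$ so that $M^{|N|}\le\prod_{X\in\Gamma_c}M^{|X|}$, and the extraction of a factor $\|J\|_r\epsilon^m$ per link of size $m$ (with a factor $s$ for the root block) are exactly the paper's first steps. But the step you yourself flag as ``the crux'' --- summing over all block-connected hypergraphs block-rooted at $y$ --- is where the entire content of the proposition lives, and your proposed device does not close it. The difficulty is recursive, not linear: when you remove the least link $X_1$ (of size $m$) meeting the root block, the remainder splits into $k\le m$ block-connected components, each again an arbitrary rooted hypergraph, attached at image sites of $X_1'$; so each link of size $m$ carries an entropy factor of the form $(1+w)^m$, where $w$ is the full generating function of the rooted sums, not a single factor $\epsilon^{m/2}$ times a geometric series. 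Your split $\epsilon^{|X|}=\epsilon^{|X|/2}\cdot\epsilon^{|X|/2}$ with $\sum_{k\ge1}\epsilon^{k/2}=1/c$ correctly anticipates where $c=\epsilon^{-1/2}-1$ enters (the paper's critical point is precisely $1+w=\epsilon^{-1/2}$), but as stated it accounts for only one generation of attachments; nothing in your bookkeeping controls the nested sub-hypergraphs, and without that control the sum is not even shown to be finite. The paper closes the recursion with the bound $a_n\le 2s\|J\|_r\sum_{m\ge1}\epsilon^m\sum_{k=0}^m\binom{m}{k}\sum_{n_1+\cdots+n_k+1=n}a_{n_1}\cdots a_{n_k}$ (the $\binom{m}{k}$ coming from an injective map of components to the at most $m$ sites of $X_1'$), converts it to the fixed-point identity $w=2s\|J\|_rz\,\epsilon(1+w)/\left(1-\epsilon(1+w)\right)$ as in (\ref{id}), and extracts the coefficient bound $\bar a_n\le c(2s\|J\|_r)^nc^{-2n}$ of (\ref{imply}) by inverting $z_1=w\left(1-\epsilon(1+w)\right)/\left(\epsilon(1+w)\right)$ on $0\le w\le c$. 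None of this machinery, or an equivalent induction on the number of links, appears in your plan.

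A second, related warning sign: your claimed estimate $\sum_{y\in N}|w_N|M^{|N|}\le\frac{2s(c+\log M)}{c^2}\|J\|_r$ is not what any direct summation produces. The geometric series over the number of links yields $c\,\frac{2s\|J\|_r/c^2}{1-2s\|J\|_r/c^2}$, whose denominator already requires the smallness $2s\|J\|_r<c^2$; your linear-in-$\|J\|_r$ form with $\log M$ in the constant follows from this only \emph{after} invoking (\ref{Jr}) --- the hypothesis you are supposed to be applying at the end. The appearance of $\log M$ in a bound that should come purely from hypergraph combinatorics is the giveaway that the estimate was matched to the target rather than derived; the genuine output is nonlinear in $\|J\|_r$, reflecting the fixed-point structure of the rooted-hypergraph recursion.
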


\begin{remark}
\textnormal{The inequality (\ref{ineq}) is a standard sufficient
condition for (\ref{conv}). It will be applied in the following
Theorem \ref{expar}.}
\end{remark}

\begin{proof}
We notice that when $||J||_r$ is small (say $||J||_r\leq
\frac{1}{2}$), $e^{|J(X)|}-1\leq 2|J(X)|$ by the mean value
theorem. Also, it easily follows from (\ref{J}) that for all $X$
with cardinality $m$ and containing a fixed $x$, $\sum|J(X)|\leq
||J||_re^{-rm}$. More importantly, for $\Gamma_c^*=N$, $|N|\leq
\sum|X|$ with $X$ in $\Gamma_c$. We have
\begin{eqnarray}
\sum_{y\in N}|w_N| M^{|N|}&\leq&\sum_{y\in N}\sum_{\Gamma_c^*=N}M^{|N|}\prod_{X\in \Gamma_c}2|J(X)|\notag\\
&\leq&\sum_{y\in N}\sum_{\Gamma_c^*=N}\prod_{X\in
\Gamma_c}2|J(X)|M^{|X|}\notag\\&=&\sum_{y\in
\Gamma_c^*}\prod_{X\in \Gamma_c}2|J(X)|M^{|X|}.
\end{eqnarray}
We say that a hypergraph $\Gamma_c$ is block-rooted at $y$ if its
support intersects a fixed block $y^o$. Let $a_n(y)$ be the
contribution of all block-connected hypergraphs with $n$ links
that are block-rooted at $y$,
\begin{equation}
\label{ann} a_n(y)=\sum_{y\in \Gamma_c^*: |\Gamma_c|=n}\prod_{X\in
\Gamma_c}2|J(X)|M^{|X|}.
\end{equation}
Then
\begin{equation}
\sum_{y\in N}|w_N| M^{|N|}\leq \sum_{n=1}^{\infty}\sup_{y\in
\mathcal{L'}}a_n(y).
\end{equation}

Let $a_n$ be the supremum over $y$ of the contribution of
block-connected hypergraphs with $n$ links that are block-rooted
at $y$, i.e., $a_n=\sup_{y\in \mathcal{L'}}a_n(y)$. It seems that
once we show that $a_n$ is exponentially small, the geometric
series above will converge, and our claim might follow. To
estimate $a_n$, we relate to some standard combinatorial facts
\cite{Minlos}. The rest of the proof follows from a series of
lemmas.
\end{proof}

\begin{lemma}
Let $a_n$ be the supremum over $y$ of the contribution of
block-connected hypergraphs with $n$ links that are block-rooted
at $y$. Then $a_n$ satisfies the recursive bound
\begin{equation}
a_n\leq
2s||J||_r\sum_{m=1}^{\infty}\epsilon^m\sum_{k=0}^{m}\tbinom
mk\sum_{a_{n_1},...,a_{n_k}: n_1+\cdots+n_k+1=n}a_{n_1}\cdots
a_{n_k}
\end{equation}
for $n\geq 1$, where $\tbinom mk$ is the binomial coefficient.
\end{lemma}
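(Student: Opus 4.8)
The plan is to set up a recursive decomposition of each block-connected, block-rooted hypergraph by isolating a single ``root link'' that touches the block $y^o$, and then bounding the remaining links as independent branches hanging off that root. Concretely, I would begin from the definition $a_n(y)=\sum_{y\in\Gamma_c^*:\,|\Gamma_c|=n}\prod_{X\in\Gamma_c}2|J(X)|M^{|X|}$ and, for each admissible $\Gamma_c$, distinguish a link $X_0\in\Gamma_c$ whose support meets the block $y^o$ (such a link exists precisely because $\Gamma_c$ is block-rooted at $y$). Writing $m=|X_0|$, the weight of the whole hypergraph splits multiplicatively as $2|J(X_0)|M^{m}$ times the product of weights over the remaining $n-1$ links, since the links form a disjoint collection of subsets.

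First I would carry out the sum over the root link for a fixed cardinality $m$. Because $X_0$ must intersect $y^o$ and $|y^o|=s$, I can bound the root contribution by summing over the $s$ sites $x\in y^o$ and over all $X_0\ni x$ with $|X_0|=m$. The norm estimate from the proof of Proposition~\ref{basic}, namely $\sum_{X\ni x,\,|X|=m}|J(X)|\le ||J||_r\, e^{-rm}$, then yields
\begin{equation*}
\sum_{\substack{X_0\cap y^o\neq\emptyset\\ |X_0|=m}}2|J(X_0)|M^{m}\le \sum_{x\in y^o}2||J||_r\, e^{-rm}M^{m}=2s||J||_r\,\epsilon^{m},
\end{equation*}
using $M^{m}e^{-rm}=(Me^{-r})^{m}=\epsilon^{m}$. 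Summing $m$ from $1$ to $\infty$ produces the prefactor $2s||J||_r\sum_{m=1}^\infty\epsilon^m$.

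Next I would handle the remaining links. After deleting $X_0$, the surviving links organize into block-connected components, each of which is block-connected to $X_0$ and hence shares one of the blocks touched by the $m$ sites of $X_0$. I would root each such branch at one of these sites. Since over-counting only helps an upper bound, I attach, independently at each of the $m$ sites of $X_0$, either nothing or one block-connected branch rooted at the block containing that site; choosing the $k$ sites that carry branches gives the factor $\tbinom{m}{k}$. A branch with $n_i$ links rooted at a block $z_i$ contributes at most $a_{n_i}(z_i)\le a_{n_i}$, and the bookkeeping of links forces $n_1+\cdots+n_k+1=n$, the $+1$ accounting for $X_0$ itself. Assembling these pieces and taking the supremum over $y$ (in which the branch bound is already uniform) gives exactly the claimed recursion.

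The main obstacle is the final combinatorial step: justifying that the ``independent attachment at each site'' bookkeeping genuinely dominates the true sum over hypergraphs. The delicate points are that a single branch block-connected to $X_0$ may be reachable through several sites (so rooting it at one site over-counts, which is acceptable), that distinct branches remain disjoint as link-collections from $X_0$ and from each other (which is automatic, as they are the components of $\Gamma_c\setminus\{X_0\}$), and that the weight $\prod_X 2|J(X)|M^{|X|}$ genuinely factorizes across root and branches. Once one checks that every admissible $\Gamma_c$ is produced at least once by this construction with a weight no smaller than its own, the inequality follows; I expect this verification, rather than any analytic estimate, to be the crux.
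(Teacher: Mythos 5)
Your proposal is correct and takes essentially the same route as the paper's proof: extract a root link meeting the block $y^o$ (the paper makes this canonical by taking the \emph{least} such link under a linear order, where you instead allow over-counting), bound its contribution by $2s||J||_r\epsilon^m$ via the norm estimate, and decompose the remainder into block-connected components injectively rooted through the root link, giving the factor $\tbinom{m}{k}$ and the product $a_{n_1}\cdots a_{n_k}$ with $n_1+\cdots+n_k+1=n$. The only cosmetic difference is that the paper indexes the binomial choice by the at most $m$ image blocks in $X_1'$ rather than by the $m$ sites of the root link, and your construction handles the base case $n=1$ uniformly (via $k=0$) where the paper checks it separately.
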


\begin{proof}
We first linearly order the points $x$ in $\mathcal{L}$ and also
linearly order the subsets $X$ of $\mathcal{L}$. This naturally
induces a linear ordering of the points $y$ in $\mathcal{L'}$. For
a fixed but arbitrarily chosen $y$ in $\mathcal{L'}$, we examine
(\ref{ann}). Write $\Gamma_c=\{X_1\}\cup \Gamma^1_c$, where $X_1$
is the least $X$ in $\Gamma_c$ with $y^o \cap X_1 \neq \emptyset$.
There must be such a set, since $y\in \Gamma_c^*$. Moreover, there
must be some $x\in y^o$ such that $x\in X_1$, of which there are
$s$ possibilities, as the block cardinality is $s$. Then
\begin{equation}
a_n(y)\leq s\sum_{m=1}^{\infty}\sup_{x\in \mathcal{L}}\sum_{X_1:
x\in X_1, |X_1|=m}2|J(X_1)|M^{m}\sum_{\Gamma^1_c}\prod_{X\in
\Gamma^1_c}2|J(X)|M^{|X|}.
\end{equation}
As a consequence,
\begin{equation}
a_n(y)\leq \sum_{m=1}^{\infty}2s||J||_r
\epsilon^{m}\sum_{\Gamma^1_c}\prod_{X\in
\Gamma^1_c}2|J(X)|M^{|X|}.
\end{equation}
The remaining hypergraph $\Gamma^1_c$ has $n-1$ subsets and breaks
into $k: k\leq m$ block-connected components
$\Gamma_1,...,\Gamma_k$ of sizes $n_1,...,n_k$, with
$n_1+\cdots+n_k=n-1$. The set of such components may be empty, or
it could just be the original block-connected set. For each
component $\Gamma_i$, there is a least block $y_i^o$ through which
it is block-connected to $X_1$. The image $\{y_i\}$ of these
blocks is a subset of $X'_1$ in $\mathcal{L'}$, thus has no more
than $k: k\leq m$ points, and the components are block-rooted at
these image sites. Furthermore, different $\Gamma_i$'s correspond
to disjoint $\Gamma'_i$'s, as $y_i\in \Gamma'_i$, the map from the
components to this image is injective. So we have
\begin{equation}
a_n(y)\leq \sum_{m=1}^{\infty}2s||J||_r \epsilon^{m}\sum_{k=0}^{m}
\tbinom{m}{k}\sum_{a_{n_1},...,a_{n_k}:
n_1+\cdots+n_k+1=n}a_{n_1}\cdots a_{n_k}.
\end{equation}
Our inductive claim follows by taking the supremum over all $y$ in
$\mathcal{L'}$. Finally, we look at the base step: $n=1$. In this
simple case, as reasoned above, we have
\begin{eqnarray}
a_1&=&\sup_{y\in \mathcal{L'}}\sum_{y\in \Gamma_c^*:
|\Gamma_c|=1}\prod_{X\in
\Gamma_c}2|J(X)|M^{|X|}\notag\\&\leq&s\sum_{m=1}^{\infty}\sup_{x\in
\mathcal{L}}\sum_{X: x\in X,
|X|=m}2|J(X)|M^{m}\notag\\&=&\sum_{m=1}^{\infty}2s||J||_r\epsilon^{m},
\end{eqnarray}
and this verifies our claim.
\end{proof}

Clearly, $\sum_{y\in N}|w_N| M^{|N|}$ will be bounded above by
$\sum_{n=1}^{\infty}\bar{a}_n$, if
\begin{equation}
\label{a} \bar{a}_n=
2s||J||_r\sum_{m=1}^{\infty}\epsilon^m\sum_{k=0}^m\tbinom
mk\sum_{\bar{a}_{n_1},...,\bar{a}_{n_k}:
n_1+\cdots+n_k+1=n}\bar{a}_{n_1}\cdots \bar{a}_{n_k}
\end{equation}
for $n\geq 1$, i.e., equality is obtained in the above lemma.

\begin{lemma}
Consider the coefficients $\bar{a}_n$ that bound the contributions
of block-connected and block-rooted hypergraphs with $n$ links.
Let $w=\sum_{n=1}^{\infty}\bar{a}_n z^n$ be the generating
function of these coefficients. Then the recursion relation
(\ref{a}) for the coefficients is equivalent to the formal power
series generating function identity
\begin{equation}
\label{id}
w=2s||J||_rz\sum_{m=1}^{\infty}\epsilon^m(1+w)^m=2s||J||_rz\frac{\epsilon(1+w)}{1-\epsilon(1+w)}.
\end{equation}
\end{lemma}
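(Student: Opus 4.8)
The plan is to verify that the combinatorial recursion $(\ref{a})$ and the generating-function identity $(\ref{id})$ encode exactly the same information, so the equivalence reduces to an elementary manipulation of formal power series. I would start from the definition $w=\sum_{n=1}^{\infty}\bar{a}_n z^n$ and substitute the recursion $(\ref{a})$ for $\bar{a}_n$, then argue that the key structural fact is that the inner sum $\sum_{n_1+\cdots+n_k+1=n}\bar{a}_{n_1}\cdots\bar{a}_{n_k}$ is precisely the coefficient-extraction identity for $w^k$: in terms of generating functions, $\sum_{n}\Bigl(\sum_{n_1+\cdots+n_k=n-1}\bar{a}_{n_1}\cdots\bar{a}_{n_k}\Bigr)z^{n}=z\,w^{k}$, where the extra factor $z$ accounts for the ``$+1$'' (the link $X_1$ removed at the root).

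Concretely, I would multiply $(\ref{a})$ by $z^n$ and sum over $n\geq 1$. On the right-hand side the factor $2s\|J\|_r$ pulls out, the sum over $m$ and the binomial sum survive, and for each fixed $m$ the sum over $k$ from $0$ to $m$ together with the convolution sum produces $\sum_{k=0}^{m}\binom{m}{k}z\,w^{k}=z(1+w)^{m}$ by the binomial theorem. (The $k=0$ term corresponds to the empty component set, contributing $z\cdot 1$, which matches the base case $n=1$.) This yields
\begin{equation*}
w=2s\|J\|_r\,z\sum_{m=1}^{\infty}\epsilon^{m}(1+w)^{m},
\end{equation*}
which is the first equality in $(\ref{id})$. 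The second equality is then just summation of the geometric series $\sum_{m=1}^{\infty}\bigl(\epsilon(1+w)\bigr)^{m}=\frac{\epsilon(1+w)}{1-\epsilon(1+w)}$, valid as a formal power series since $w$ has no constant term.

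The one point requiring care, rather than a genuine obstacle, is bookkeeping the correspondence between the constrained convolution index set $\{n_1+\cdots+n_k+1=n\}$ and the power $z\,w^{k}$: I want to confirm that the single shift by $1$ (one unit of $z$) correctly absorbs the distinguished root link for every $k$ simultaneously, including $k=0$, and that exchanging the order of the (absolutely convergent, or formally well-defined) double sum over $n$ and the sums over $m,k$ is legitimate at the level of formal power series. Since all coefficients $\bar{a}_n$ are nonnegative and the identity is to be read as an equality of formal power series, no analytic convergence issue arises here; the reordering is justified coefficientwise. I expect the proof to be short, with the binomial collapse $\sum_{k=0}^{m}\binom{m}{k}w^{k}=(1+w)^{m}$ as its one substantive step.
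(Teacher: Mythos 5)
Your proof is correct and matches the paper's argument in substance: both hinge on the binomial collapse $\sum_{k=0}^{m}\binom{m}{k}w^{k}=(1+w)^{m}$ together with the convolution identity $z\,w^{k}=\sum_{n}z^{n}\sum_{n_1+\cdots+n_k+1=n}\bar{a}_{n_1}\cdots\bar{a}_{n_k}$ and a coefficientwise comparison, with the paper merely reading the same computation in the opposite direction (expanding the identity and comparing term by term rather than summing the recursion against $z^{n}$). The only refinement worth noting is that the final geometric summation needs $\epsilon(1+w)$ to have constant term of modulus less than one, which holds because $\epsilon=Me^{-r}<1$, not merely because $w$ has no constant term.
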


\begin{proof}
Notice that $(1+w)^m=\sum_{k=0}^m \tbinom mk w^k$, thus
\begin{eqnarray}
w=2s||J||_rz\sum_{m=1}^{\infty}\epsilon^m\sum_{k=0}^m \tbinom mk
w^k.
\end{eqnarray}
Writing completely in terms of $z$, we have
\begin{equation}
\sum_{n=1}^{\infty}\bar{a}_n
z^n=2s||J||_r\sum_{m=1}^{\infty}\epsilon^m\sum_{k=0}^m \tbinom mk
\sum_{\bar{a}_{n_1},...,\bar{a}_{n_k}:
n_1+\cdots+n_k+1=n}\bar{a}_{n_1}\cdots \bar{a}_{n_k}z^n.
\end{equation}
Our claim follows from term-by-term comparison.
\end{proof}

\begin{lemma}
If $w$ is given as a function of $z$ as a formal power series by
the generating function identity (\ref{id}), then this power
series has a nonzero radius of convergence $|z|\leq
\frac{1}{2s||J||_r}c^2$. For big enough $c$, this radius of
convergence is arbitrarily large, and in particular, the series
will converge for $z=1$, i.e., the sum of the bounds on the
contributions of block-connected and block-rooted hypergraphs
converges.
\end{lemma}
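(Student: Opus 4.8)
The plan is to treat the identity (\ref{id}) as an implicit definition of $w$ as a function of $z$ and to locate the dominant singularity of this function by inversion. First I would solve (\ref{id}) for $z$ in terms of $w$, obtaining
\[
z=g(w):=\frac{w\left(1-\epsilon(1+w)\right)}{2s||J||_r\,\epsilon(1+w)}.
\]
Since $g$ is analytic at $w=0$ with $g(0)=0$ and $g'(0)=\frac{1-\epsilon}{2s||J||_r\,\epsilon}\neq 0$, the inverse function theorem guarantees that $w(z)=g^{-1}(z)$ is analytic in a neighborhood of $z=0$ and coincides there with the formal power series $\sum_{n\geq 1}\bar{a}_n z^n$ (a consistency check: $w'(0)=1/g'(0)=2s||J||_r\,\epsilon/(1-\epsilon)=\bar{a}_1$). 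The radius of convergence is then the distance from the origin to the nearest singularity of this inverse.

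Next I would exploit positivity to pin that singularity down. From the recursion (\ref{a}) every coefficient $\bar{a}_n$ is nonnegative, since $\bar{a}_1>0$ and the recursion combines only nonnegative terms, so by Pringsheim's theorem the radius of convergence $R$ is itself a singular point lying on the positive real axis. It therefore suffices to find the smallest positive $z$ at which $w(z)$ fails to be analytic, and by the inverse function theorem this is the image $g(w^\ast)$ of the first positive critical point $w^\ast$ where $g'(w^\ast)=0$, at which $g^{-1}$ develops a square-root branch point.

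The central computation is to locate $w^\ast$ and evaluate $g(w^\ast)$. Differentiating gives
\[
g'(w)=\frac{-\epsilon w^2-2\epsilon w+(1-\epsilon)}{2s||J||_r\,\epsilon(1+w)^2},
\]
whose numerator is a downward parabola vanishing at the positive root $w^\ast=\frac{1}{\sqrt{\epsilon}}-1=c$ (the other root being negative), so that $g'>0$ on $[0,c)$ and $g$ maps $[0,c]$ increasingly and bijectively onto $[0,g(c)]$. Using $1+c=1/\sqrt{\epsilon}$ and $\epsilon c=\sqrt{\epsilon}-\epsilon$ I would simplify $1-\epsilon(1+c)=1-\sqrt{\epsilon}$ and $c\sqrt{\epsilon}=1-\sqrt{\epsilon}$, whence $g(c)=\frac{(1-\sqrt{\epsilon})^2}{2s||J||_r\,\epsilon}=\frac{c^2}{2s||J||_r}$, since $(1-\sqrt{\epsilon})^2=\epsilon c^2$. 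This identifies $R=\frac{c^2}{2s||J||_r}$. Finally, because $c=\frac{1}{\sqrt{\epsilon}}-1$ can be made arbitrarily large by taking $\epsilon$ small, the radius $R$ exceeds $1$ in the high-temperature regime, so the series converges at $z=1$, which is the desired conclusion.

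The main obstacle is not the algebra but the justification that $g(c)$ is the genuine dominant singularity rather than a merely formal critical value, i.e. ruling out any closer singularity. This is exactly what the positivity-plus-Pringsheim argument delivers: nonnegativity of the $\bar{a}_n$ forces $R$ to be attained on the positive axis, monotonicity of $g$ on $[0,c]$ shows $g^{-1}$ stays analytic up to $g(c)$ and branches precisely there, and one checks the denominator $1-\epsilon(1+w)$ and the factor $1+w$ remain positive throughout $[0,c]$ so that no pole intervenes; Pringsheim then excludes any nonreal singularity of strictly smaller modulus.
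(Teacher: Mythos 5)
Your proposal is correct and follows essentially the same route as the paper: the paper likewise solves (\ref{id}) for $z_1=2s||J||_r z$ as $z_1=\frac{w\left(1-\epsilon(1+w)\right)}{\epsilon(1+w)}$ and observes, ``by elementary calculus,'' that this increases from $0$ to $c^2$ as $w$ runs from $0$ to $c$, so that the inverse $w(z_1)$ ranges over $[0,c]$. Your added ingredients --- the explicit computation that $g'$ vanishes precisely at $w^\ast=\frac{1}{\sqrt{\epsilon}}-1=c$ with $g(c)=\frac{c^2}{2s||J||_r}$, and the Pringsheim/positivity argument pinning the dominant singularity to the positive real axis --- merely supply the rigorous justification the paper compresses into that one sentence, and they check out.
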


\begin{proof}
Without loss of generality, assume $z\geq 0$. Set $z_1=2s||J||_r
z$. Solving (\ref{id}) for $z_1$ gives
\begin{equation}
z_1=\frac{w\left(1-\epsilon(1+w)\right)}{\epsilon(1+w)}.
\end{equation}
By elementary calculus, this increases as $w$ goes from $0$ to $c$
to have values $z_1$ from $0$ to $c^2$. It follows that as $z_1$
goes from $0$ to $c^2$, the $w$ values range from $0$ to $c$.
\end{proof}

\noindent \textbf{Proof of Proposition \ref{basic} continued.} We
notice that in the above lemma,
$w=\sum_{n=1}^{\infty}\bar{a}_nz^n=c$ corresponds to
$z_1=2s||J||_r z=c^2$, which implies that for each $n$,
\begin{equation}
\label{imply} \bar{a}_n \leq c(2s||J||_r)^nc^{-2n}.
\end{equation}
Gathering all the information we have obtained so far,
\begin{eqnarray}
\sum_{y\in N}|w_N| M^{|N|}&\leq&\sum_{n=1}^{\infty}c(2s||J||_r)^nc^{-2n}\notag\\
&=&c\frac{\frac{2s||J||_r}{c^2}}{1-\frac{2s||J||_r}{c^2}}\leq
\log(M)
\end{eqnarray}
by (\ref{Jr}). \qed

We have shown in (\ref{tt}) that the denominator of (\ref{part})
has a cluster representation. We now examine the effect of
multiplying $\sigma_W$ to this cluster representation as in the
numerator of (\ref{part}). There will be two kinds of terms. In
some of these, none of the block-connected components intersect
$W$, so for these terms one gets a product of $\sigma_W$ with a
product of independent $w_N$'s. For the other terms one decomposes
$\Delta$ into one block-connected component that is connected to
$W$ and remaining block-connected components that are not. The
result is the representation $\sum_{R,
\Delta'}\tilde{w}_R\prod_{N\in\Delta'}w_N$, where $R=\emptyset$ or
$R\cap W'\neq \emptyset$, and $\tilde{w}_R$ is a sum over
hypergraphs $\Delta_R$ with $\cup \Delta_R=R$ such that $W$,
$\Delta_R$ is block-connected. Therefore
\begin{equation}
\label{frac} \frac{\partial J'(Z)}{\partial
J(W)}=\sum_{\sigma'}\sigma'_Z\frac{\sum_{R,
\Delta'}\tilde{w}_R\prod_{N\in\Delta'}w_N}{\sum_{\Delta}\prod_{N\in
\Delta} w_N}.
\end{equation}

\begin{theorem}
\label{expar} Suppose the original interaction $J$ is at high
temperature (cf. (\ref{Jr})). Then for every subset $W$ of the
original lattice and every subset $Z$ of the image lattice, the
partial derivative $\frac{\partial J'(Z)}{\partial J(W)}$ of the
RG transformation (\ref{frac}) is well-defined.
\end{theorem}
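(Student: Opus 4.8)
The plan is to establish convergence of both the numerator and denominator in the cluster representation (\ref{frac}), uniformly in the (finite) volume, and to bound the denominator away from zero, so that the ratio is a well-defined analytic quantity that survives the infinite-volume limit. The denominator $\sum_\Delta \prod_{N\in\Delta} w_N = W(\sigma')$ is already under control: Proposition \ref{basic} gives the Kotecký--Preiss condition (\ref{conv}), and the avoidance-probability estimate (\ref{avoid}) of Theorem \ref{KP} shows the associated cluster expansion for $\log(W(\sigma'))$ converges with $W(\sigma')>0$. So the principal task is the numerator.

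**First I would** treat the extra factor $\tilde{w}_R$ attached to the block-connected component anchored to $W$. The natural move is to reuse the combinatorial machinery of Proposition \ref{basic}: repeat the block-rooting argument, but now root the distinguished component at $W$ rather than at a single block spin site $y$. Since $W$ is a fixed finite set, its image $W'$ meets only finitely many blocks, so I would define root-contributions $a_n(W)$ in analogy with $a_n(y)$ in (\ref{ann}), derive the same recursive bound as in the lemma (with $|W'|$ playing the role of the single root $y$), and conclude via the generating-function identity (\ref{id}) that $\sum_R |\tilde{w}_R| M^{|R|}$ converges geometrically under the same hypothesis (\ref{Jr}). Here the key point is that multiplication by $\sigma_W$ forces at least one hyper-edge $X\in\Gamma_c$ to intersect $W$ (otherwise the odd spin factor averages to zero by (\ref{begin})), which is exactly what anchors the expansion and makes $\tilde{w}_R$ summable.

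**Then I would** combine the two pieces. Writing the ratio as $\sum_R \tilde{w}_R \cdot \big(\sum_{\Delta'}\prod_{N\in\Delta'} w_N / \sum_\Delta \prod_{N\in\Delta} w_N\big)$, where $\Delta'$ ranges over configurations disjoint from $R$, the parenthetical factor is precisely an avoidance probability of the type controlled by (\ref{avoid}), hence bounded by $M^{|R|}$. Thus
\begin{equation*}
\left|\frac{\partial J'(Z)}{\partial J(W)}\right| \leq \sum_{\sigma'}\sum_R |\tilde{w}_R|\, M^{|R|},
\end{equation*}
and the right-hand side is finite and bounded uniformly in the volume by the geometric estimate just obtained. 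Since the bound is independent of the finite lattice, the quantity (\ref{frac}) is well-defined in the infinite-volume limit, and its analyticity in $J$ on the polydisc (\ref{Jr}) follows from uniform convergence of the expansion. This shows $\frac{\partial J'(Z)}{\partial J(W)}$ is well-defined, as claimed.

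**The hard part will be** the bookkeeping in the numerator expansion: verifying that the decomposition into one $W$-connected component $R$ and the disjoint remainder $\Delta'$ is clean, and that the $M^{|R|}$ factor from the avoidance estimate exactly absorbs the weight $\tilde{w}_R$ without double-counting block-connectivity. The other delicate point is ensuring the $W$-rooted recursion genuinely matches (\ref{id}), i.e. that the combinatorial factors from the $s$ sites per block and the injectivity of the component-to-image map carry over when the root is an edge meeting $W$ rather than a single block site. Once that parallel is drawn, the convergence is immediate from the lemmas already proved.
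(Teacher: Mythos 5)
Your overall strategy coincides with the paper's: bound the ratio $\sum_{\Delta'}\prod_{N\in\Delta'}w_N/\sum_{\Delta}\prod_{N\in\Delta}w_N$ by the Koteck\'{y}--Preiss avoidance estimate (\ref{avoid}), then show that $\sum_R|\tilde{w}_R|$ with the appropriate $M$-weights is finite because the expansion is anchored at the finite set $W$. Where you differ is in implementing the second step: rather than re-running the rooted recursion and generating-function machinery with $W$ as root, the paper simply reuses the conclusion of Proposition \ref{basic} componentwise --- remove $W$, observe that the remaining hypergraph breaks into $k$ with $0\leq k\leq |W|$ block-connected components, each pinned at a site of $W'$ and each contributing at most $\log(M)$ by (\ref{ineq}) --- which yields the closed-form bound $M^{|W|}\left(1+\log(M)\right)^{|W|}$ of (\ref{suc}). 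Your re-rooted recursion would also work (with $s|W|$ replacing $s$ in the first step, harmless since $W$ is fixed and finite), but it is heavier and buys nothing here, since a $|W|$-dependent constant is perfectly acceptable for well-definedness.

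Two details in your write-up are off, though neither is fatal. First, the components $N\in\Delta'$ must avoid $R\cup W'$, not just $R$: a component that merely shares a block with $W$ cannot be factored out as an independent $w_N$, because it shares $T_y$ factors with the $\sigma_W$ average; so the correct avoidance bound is $M^{|R\cup W'|}$ as in (\ref{MWR}), which costs you only an extra factor $M^{|W|}$. Second, your claimed ``key point'' --- that multiplying by $\sigma_W$ forces some hyper-edge to intersect $W$, else the odd spin factor averages to zero by (\ref{begin}) --- is false for general kernels: conditions (\ref{sym}) and (\ref{norm}) yield (\ref{begin}) but do not make $\sum_{\sigma}T_y(\sigma,\sigma'_y)\sigma_x$ vanish (majority rule is a counterexample), and this is exactly why the paper's representation (\ref{frac}) allows $R=\emptyset$. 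The anchoring comes from the block-connectivity to $W$ built into the definition of $\tilde{w}_R$, not from a parity cancellation; the bare $R=\emptyset$ term is bounded by $M^{|W'|}$ and is harmless, so your conclusion stands once these two points are corrected.
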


\begin{proof}
The proof of this theorem is an application of the
Koteck\'{y}-Preiss result \cite{Kotecky}. Recall that $N\in
\Delta'$ implies $N\cap(R\cup W')=\emptyset$. By (\ref{avoid}),
\begin{equation}
\label{MWR}
\left|\sum_{\Delta'}\prod_{N\in\Delta'}w_N/\sum_{\Delta}\prod_{N\in
\Delta} w_N\right|\leq M^{|R\cup W'|}.
\end{equation}
To verify our claim, we need to estimate
\begin{equation}
\label{estimate} |\frac{\partial J'(Z)}{\partial J(W)}| \leq
\sum_{R}|\tilde{w}_R|M^{|R\cup W'|} \leq
\sum_{\Delta_R}M^{|W|}\prod_{Y\in \Delta_R}|w_Y| M^{|Y|}.
\end{equation}
But this is easy, remove $W$, the remaining hypergraph breaks up
into $k: 0\leq k\leq |W|$ block-connected components. So this last
quantity is bounded by
\begin{equation}
\label{suc} M^{|W|}\sum_{k=0}^{|W|}\tbinom {|W|}{k} \left(\log
(M)\right)^k=M^{|W|} \left(1+\log(M)\right)^{|W|}.
\end{equation}
\end{proof}

\section{Band structure}
\label{band} In this section, we concentrate our attention on
finite-range and translation-invariant Hamiltonians. We will show
that the matrix of partial derivatives in this case displays an
approximate band property.

\begin{remark}
\textnormal{Let
\begin{equation}
\text{diam}(X)=\sup\{d(x,y): x\in X, y\in X\}
\end{equation}
be the volume of a subset $X$ of the original lattice. For later
purposes, we point out that the finite-range assumption on the
Hamiltonian implies a weaker assumption, finite-body, i.e., there
is a constant $S$ such that $J(X)=0$ for $\text{diam}(X)>S$
implies there is a constant $D$ such that $J(X)=0$ for $|X|>D$,
where $D$ only depends on the maximum possible range $S$ and the
number of dimensions $d$.}
\end{remark}

\begin{proposition}
\label{pin} Suppose the original interaction $J$ is at high
temperature (cf. (\ref{Jr})). Then for each block spin site $y$,
we have
\begin{equation}
\sum_{y\in N: |N|>P}|w_N| M^{|N|}\leq \epsilon(P),
\end{equation}
where \begin{equation} \label{epsilon}
\epsilon(P)=c\frac{\left(\frac{2s||J||_r}{c^2}\right)^{\frac{P}{D}}}{1-\frac{2s||J||_r}{c^2}},
\end{equation}
and for a fixed $||J||_r$ that satisfies (\ref{Jr}), $\epsilon(P)
\rightarrow 0$ as $P \rightarrow \infty$.
\end{proposition}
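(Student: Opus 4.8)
The plan is to rerun the geometric-series estimate from the proof of Proposition \ref{basic}, keeping only the tail of the series and showing that this tail already carries every contribution with large image support $|N|$. From that proof I have in hand the per-link bound (\ref{imply}), $\bar{a}_n\leq c(2s||J||_r)^nc^{-2n}$, which dominates the contribution $a_n=\sup_y a_n(y)$ of all block-connected hypergraphs with exactly $n$ links that are block-rooted at $y$. The one new ingredient required is a lower bound on the number of links $n$ in terms of $|N|$, so that the constraint $|N|>P$ can be turned into a constraint on $n$.

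First I would invoke the finite-body reduction recorded in the Remark preceding this proposition: the finite-range hypothesis supplies a constant $D$ with $J(X)=0$ whenever $|X|>D$. Hence every link $X$ contributing to a nonzero $w_N$ has $|X|\leq D$, and since each site of $X$ belongs to exactly one block, the number of distinct blocks meeting $X$ is at most $|X|$, so $|X'|\leq|X|\leq D$. For a block-connected hypergraph $\Gamma_c$ with $|\Gamma_c|=n$ and image support $N=\Gamma_c^*=\cup_{X\in\Gamma_c}X'$, this gives
\begin{equation*}
|N|\leq\sum_{X\in\Gamma_c}|X'|\leq nD,
\end{equation*}
the inequality $|N|\leq\sum|X|$ already used in Proposition \ref{basic}, now sharpened by finite-body. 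Consequently $|N|>P$ forces $n>P/D$, so the set of hypergraphs with $|N|>P$ is contained in the set with $n>P/D$.

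The second step is to restrict the summation index. Repeating the chain of inequalities from the proof of Proposition \ref{basic} but discarding every term with $|N|\leq P$, and using both the nonnegativity of the terms and the containment above, I would obtain
\begin{align*}
\sum_{y\in N:\,|N|>P}|w_N|M^{|N|}&\leq\sum_{n>P/D}\bar{a}_n\leq\sum_{n>P/D}c\left(\frac{2s||J||_r}{c^2}\right)^n\\
&\leq c\,\frac{\left(\frac{2s||J||_r}{c^2}\right)^{P/D}}{1-\frac{2s||J||_r}{c^2}}=\epsilon(P),
\end{align*}
where the geometric series converges because (\ref{Jr}) forces the ratio $\frac{2s||J||_r}{c^2}\leq\frac{\log(M)}{c+\log(M)}<1$. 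This same strict inequality gives $\left(\frac{2s||J||_r}{c^2}\right)^{P/D}\to 0$, hence $\epsilon(P)\to 0$ as $P\to\infty$.

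I expect the only delicate point to be bookkeeping rather than substance. The factor $\left(\frac{2s||J||_r}{c^2}\right)^{P/D}$ in $\epsilon(P)$ arises from starting the geometric sum at index $P/D$, so I must be careful to state the link bound with the correct finite-body constant $D$, and to handle a noninteger $P/D$ by using $\sum_{n>P/D}q^n\leq q^{P/D}/(1-q)$ for $0<q<1$. No fresh convergence or cluster-expansion input is needed: everything rides on the estimate (\ref{imply}) already established for Proposition \ref{basic}, the new content being purely the localization of the large-$|N|$ terms into the high-$n$ tail of a convergent geometric series.
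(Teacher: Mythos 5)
Your proposal is correct and follows essentially the same route as the paper's own proof: the finite-body constant $D$ gives $|N|\leq nD$ (so $|N|>P$ forces at least $P/D$ links), and then the tail of the geometric series from (\ref{imply}) is summed starting at index $P/D$ to yield $\epsilon(P)$, with (\ref{Jr}) guaranteeing the ratio $\frac{2s||J||_r}{c^2}<1$. Your extra care with the non-integer index $P/D$ and the explicit verification that the ratio is below $1$ are details the paper leaves implicit, but they do not change the argument.
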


\begin{proof}
Due to the finite-body assumption on the Hamiltonian, any
block-connected hypergraph that is block-rooted at $y$ and with
cardinality greater than $P$ will have at least $P/D$ links. By
(\ref{imply}), this implies
\begin{eqnarray}
\sum_{y\in N: |N|>P}|w_N| M^{|N|}\leq
\sum_{n=P/D}^{\infty}c(2s||J||_r)^nc^{-2n}=\epsilon(P)
\end{eqnarray}
\end{proof}

\begin{proposition}
\label{pindown} Suppose the original interaction $J$ is at high
temperature (cf. (\ref{Jr})). Then for every $Y\subset \L'$, we
have
\begin{eqnarray}
\label{nfinal} \sum_{p=1}^{\infty}\frac{1}{p!}\sum_{N_1,...,N_p:
|\cup_1^p N_i|>P}|C\left(N_1,...,N_p\right)|c(Y, \cup_1^p
N_i)|w_{N_1}|\cdots |w_{N_p}| \leq |Y|\epsilon(P).
\end{eqnarray}
\end{proposition}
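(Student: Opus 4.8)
The plan is to obtain Proposition \ref{pindown} as the large-support tail of the Koteck\'{y}-Preiss bound (\ref{avoid}), in exactly the way that Proposition \ref{pin} is the large-polymer tail of the per-site estimate (\ref{ineq}). Write $\Phi_P(Y)$ for the left-hand side of (\ref{nfinal}). The first reduction is to a single image site. Since $c(Y,\cup_1^p N_i)=1$ forces $\cup_1^p N_i$ to be block-connected to some $y\in Y$, one has $c(Y,\cup_1^p N_i)\le\sum_{y\in Y}c(\{y\},\cup_1^p N_i)$, so that $\Phi_P(Y)\le\sum_{y\in Y}\Phi_P(\{y\})$. It therefore suffices to establish the per-site bound $\Phi_P(\{y\})\le\epsilon(P)$ and then sum over the $|Y|$ sites of $Y$, which supplies the prefactor.

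Next I would convert the support constraint into a link constraint. By the finite-body assumption recorded in the Remark opening Section \ref{band}, $J(X)=0$ whenever $|X|>D$; hence every link contributes at most $D$ sites to the support, and a block-connected configuration whose support exceeds $P$ must be assembled from at least $P/D$ links. The union $\cup_1^p N_i$ of a connected cluster is itself block-connected, so the constraint $|\cup_1^p N_i|>P$ passes to a lower bound of $P/D$ on the total number of links in the underlying hypergraphs obtained by unfolding each $w_N=\sum_{\Gamma_c^*=N}\alpha(N,\Gamma_c,\{\sigma'\}_N)$.

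With this in hand I would re-run the Koteck\'{y}-Preiss estimate behind (\ref{avoid}), but now organizing the sum by the total number $n$ of links rather than discarding it. Rooting the connected clusters that are block-connected to $y$ and bounding the connected-graph factors $C$ as in Theorem \ref{KP}, the per-link weights are precisely the $2|J(X)|M^{|X|}$ of Proposition \ref{basic}, so the contribution of all configurations with a prescribed total of $n$ links is controlled by the same coefficient $\bar a_n$ that governs the recursion (\ref{a}) and obeys the geometric bound (\ref{imply}). Summing only the admissible tail $n\ge P/D$ then reproduces, site by site, exactly the series appearing in the proof of Proposition \ref{pin}, namely $\sum_{n\ge P/D}c(2s||J||_r)^nc^{-2n}=\epsilon(P)$. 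Combined with the reduction of the first paragraph this yields $\Phi_P(Y)\le|Y|\epsilon(P)$.

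The main obstacle is the step of the previous paragraph. The largeness of a cluster's support need not be carried by any single polymer, since a cluster with support exceeding $P$ may be built entirely from small, individually harmless $N_i$'s; one therefore cannot simply apply Proposition \ref{pin} polymer by polymer. The restriction genuinely lives on the aggregate link count, and the delicate point is to verify that the Koteck\'{y}-Preiss resummation, with its connected-graph weights $C(N_1,\dots,N_p)$ and the $1/p!$, respects the bookkeeping by total links, so that imposing $n\ge P/D$ replaces the convergent series $\sum_{n\ge1}\bar a_n\le\log(M)$ by precisely its tail $\sum_{n\ge P/D}\bar a_n=\epsilon(P)$, without the cluster combinatorics inflating the constant or spoiling the clean factor $|Y|$.
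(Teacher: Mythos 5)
Your proposal is correct and follows essentially the same route as the paper: the paper's own proof likewise pins the block-connected cluster at one of the at most $|Y|$ sites of $Y$, uses the finite-body bound to convert $|\cup_1^p N_i|>P$ into a total link count of at least $P/D$, and bounds the resulting tail by $\sum_{n\geq P/D}\bar{a}_n\leq \epsilon(P)$ exactly as in Proposition \ref{pin}. Indeed, your write-up is more explicit than the paper's two-line proof, which simply cites Proposition \ref{pin}, notes block-connectivity of the cluster via (\ref{C}), and counts the $|Y|$ pinning choices, leaving the link-count bookkeeping through the Koteck\'{y}-Preiss resummation (the ``obstacle'' you flag) equally implicit.
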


\begin{proof}
This follows from Proposition \ref{pin}. Remove $Y$, the remaining
hypergraph is still block-connected by (\ref{C}). Moreover, there
can be at most $|Y|$ choices for where it is pinned down.
\end{proof}

\begin{theorem}
\label{band} Suppose the original interaction $J$ is at high
temperature (cf. (\ref{Jr})). Suppose also that it is finite-range
and translation-invariant. Then there is an approximate band
property for the matrix of partial derivatives: For subset $W$ of
the original lattice and subset $Z$ of the image lattice that are
sufficiently far apart, the partial derivative $\frac{\partial
J'(Z)}{\partial J(W)}$ of the RG transformation (\ref{frac}) is
arbitrarily small.
\end{theorem}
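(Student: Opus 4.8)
The plan is to read $\frac{\partial J'(Z)}{\partial J(W)}=\sum_{\sigma'}\sigma'_Z f(\sigma')$ off the representation (\ref{frac}) as the $Z$-th Walsh--Fourier coefficient of the ratio $f(\sigma')=\frac{\sum_{R,\Delta'}\tilde w_R\prod_{N\in\Delta'}w_N}{\sum_{\Delta}\prod_{N\in\Delta}w_N}$, and then to show that when $W$ and $Z$ are far apart only large clusters survive, so the large-support estimates of Propositions \ref{pin} and \ref{pindown} force the coefficient to be small. The key starting observation is that $w_N$ depends on the block spins only through the factors $T_y(\sigma,\sigma'_y)$ with $y\in N$, and likewise $\tilde w_R$ only through the $y$ in its support (which is block-connected to $W'$). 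Hence the dependence of $f$ on a given $\sigma'_z$ can arise only from clusters whose combined support contains $z$, and the projection $\sum_{\sigma'}\sigma'_Z(\cdot)$ kills every term whose block-spin support does not cover all of $Z$.

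Next I would expand the avoidance ratio by the Koteck\'y--Preiss formula (Theorem \ref{KP}) as the exponential of a sum over clusters $N_1,\dots,N_p$ tied to $R\cup W'$ through the factor $c(R\cup W',\cup_1^pN_i)$ and weighted by the connected coefficient $C(N_1,\dots,N_p)$. Multiplying by $\tilde w_R$, whose support is block-connected to $W$, and then extracting the coefficient of $\sigma'_Z$, I would argue that every surviving contribution carries a block-connected structure that is simultaneously pinned at $W'$ (through $\tilde w_R$) and reaches every site of $Z$ (by the first paragraph). In short, the relevant support is block-connected and spans from $W'$ to $Z$.

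The geometric input is the finite-range/finite-body assumption from the Remark preceding Proposition \ref{pin}: each link $X$ has $|X|\leq D$ and bounded diameter, so any block-connected structure whose support meets both $W'$ and $Z$ must have total support of cardinality at least some $P$ proportional to the separation $d'(W',Z)$. Choosing $P=P(d'(W',Z))$, which tends to infinity as $W$ and $Z$ move apart, I would split into two cases according to whether the large support lies in the $\tilde w_R$ component or in the Koteck\'y--Preiss clusters: the former is controlled by Proposition \ref{pin} (summing $|w_N|M^{|N|}$ over $|N|>P$), the latter by Proposition \ref{pindown} (the connected cluster sum restricted to $|\cup_1^pN_i|>P$), each producing a factor $\epsilon(P)$. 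Reassembling the bound as in (\ref{estimate})--(\ref{suc}), but now with this large-support restriction, gives $|\frac{\partial J'(Z)}{\partial J(W)}|\leq M^{|W|}(1+\log M)^{|W|}\,\epsilon(P)$ up to combinatorial factors in $|W|$ and $|Z|$; translation-invariance makes the estimate uniform in the location of $W$ and $Z$, and $\epsilon(P)\to0$ by (\ref{epsilon}) gives the claim.

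The hard part will be the rigorous justification that the Walsh projection at $Z$ forces the contributing block-connected structure to physically reach $Z$. Concretely, one must expand the Koteck\'y--Preiss exponential, track the $\sigma'_Z$-dependence through the product of $\tilde w_R$ with the cluster terms, and verify that non-vanishing coefficients come only from collections of clusters whose union of supports both covers $Z$ and stays block-connected to $W'$. Once this covering-and-connecting alternative is established, the separation $d'(W',Z)$ feeds into $P$ and the estimates $\epsilon(P)$ of Propositions \ref{pin} and \ref{pindown} close the argument.
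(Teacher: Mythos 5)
Your proposal is, in substance, the paper's own proof: the same splitting of (\ref{frac}) into contributions with $|R|>|W|P$ (controlled by Proposition \ref{pin}, giving the factor $\frac{\epsilon(P)}{\log M}M^{|W|}\left(1+\log(M)\right)^{|W|}$) and $|R|\leq |W|P$, the same inputs (Theorem \ref{KP}, Proposition \ref{pindown}, the finite-body geometry of the Remark), and the same mechanism by which the Walsh projection $\sum_{\sigma'}\sigma'_Z(\cdot)$ annihilates the local part of the expansion. Where you diverge is exactly at the step you flag as ``the hard part.'' The paper does not expand the Koteck\'{y}--Preiss exponential and classify which Walsh coefficients survive; that route forces you to resum the surviving terms and to justify the rearrangement of an infinite expansion, which is delicate. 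Instead the paper telescopes $F(\infty,\infty)=\left(F(\infty,\infty)-F(Q,\infty)\right)+\left(F(Q,\infty)-F(Q,K)\right)+F(Q,K)$, where $F(Q,K)$ truncates both the cluster sizes ($|N_i|\leq Q$) and the number $p\leq K$ of $N_i$'s per connected term. Because every connected term in the exponent is anchored to $R\cup W'$ through the factor $c(R\cup W',\cup_1^p N_i)$, and $|R|\leq|W|P$, the truncated exponent --- hence $F(Q,K)$ itself --- depends only on block spins in a ball of radius roughly $S(|W|P+QK)$ around $W'$; if $Z$ lies outside this ball, $\sum_{|R|\leq|W|P}\tilde{w}_R F(Q,K)$ is \emph{constant} in $\sigma'_Z$ and the projection kills it identically. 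Note this is weaker and simpler than your requirement that surviving supports ``cover all of $Z$'': constancy in a single $z\in Z$ already suffices. The truncation errors are then bounded in sup norm by the mean value theorem applied to the exponential, using (\ref{avoid}) and (\ref{nfinal}), each difference contributing $|R\cup W'|M^{|R\cup W'|}\epsilon(Q)$ or $|R\cup W'|M^{|R\cup W'|}\epsilon(K)$; this mean-value device is precisely what replaces your term-by-term covering analysis and is what closes the gap you identified. One further caution about your sketch: truncating only the individual cluster sizes (your $|N|\leq Q$ cutoff $u_N$) does not by itself localize the exponent, since a single connected term with many small pairwise-overlapping $N_i$'s can still reach arbitrarily far from $W'$ --- this is why the paper needs the second parameter $K$ (your instinct to restrict the union size $|\cup_1^p N_i|$, as in Proposition \ref{pindown}, points at the same fix). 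With that step filled in, your assembled bound agrees with the paper's (\ref{enough}).
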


\begin{remark}
\textnormal{Recall (\ref{epsilon}). Let
\begin{equation}
l(W, Z)=\inf\{d'(w, z): w\in W', z\in Z\}
\end{equation}
be the distance between $W$ and $Z$ measured in the image lattice.
If
\begin{equation} l(W, Z)>S(|W|P+QK),
\end{equation}
then
\begin{equation}
\label{enough} |\frac{\partial J'(Z)}{\partial J(W)}| \leq
M^{D}\left(1+\log(M)\right)^{D}\left(\frac{\epsilon(P)}{\log(M)}+\left(\epsilon(Q)+\epsilon(K)\right)(1+P)DM^{(1+P)D}\right).
\end{equation}}
\end{remark}

\begin{proof}
Fix a $P$ that is large enough. We rewrite (\ref{frac}) as
\begin{equation}
\label{small} \frac{\partial J'(Z)}{\partial
J(W)}=\sum_{\sigma'}\sigma'_Z\frac{\sum_{|R|>|W|P,
\Delta'}\tilde{w}_R\prod_{N\in\Delta'}w_N}{\sum_{\Delta}\prod_{N\in
\Delta} w_N}+\sum_{\sigma'}\sigma'_Z\frac{\sum_{|R|\leq |W|P,
\Delta'}\tilde{w}_R\prod_{N\in\Delta'}w_N}{\sum_{\Delta}\prod_{N\in
\Delta} w_N}.
\end{equation}
Following, we will verify the smallness of (\ref{small}) by
examining the two terms on the right-hand side separately.

\noindent Case 1: $|R|>|W|P$. Similarly as in the proof of Theorem
\ref{expar}, we estimate (\ref{estimate}). Remove $W$, the
remaining hypergraph (with cardinality greater than $|W|P$) breaks
up into $k: 0\leq k\leq |W|$ block-connected components, so at
least one of them has cardinality greater than $P$. By
(\ref{epsilon}), the contribution of this hypergraph is bounded by
\begin{equation}
M^{|W|}\epsilon(P)\sum_{k=0}^{|W|}\tbinom {|W|}{k} \left(\log
(M)\right)^{k-1}=\frac{\epsilon(P)}{\log(M)}M^{|W|}
\left(1+\log(M)\right)^{|W|}.
\end{equation}

\noindent Case 2: $|R|\leq |W|P$. We need to do a more careful
analysis for this case. By the Koteck\'{y}-Preiss theorem
\cite{Kotecky}, (\ref{ineq}) implies
\begin{multline}
\label{exp}
\sum_{\Delta'}\prod_{N\in\Delta'}w_N/\sum_{\Delta}\prod_{N\in
\Delta}
w_N\\=\exp\left(-\sum_{p=1}^{\infty}\frac{1}{p!}\sum_{N_1,...,N_p}C\left(N_1,...,N_p\right)c(R\cup
W', \cup_1^p N_i)w_{N_1}\cdots w_{N_p}\right).
\end{multline}
For notational convenience, we will denote the right-hand side of
(\ref{exp}) by $F(\infty, \infty)$, where the first parameter of
$F$ indicates the cardinality restriction over the subsets $N_i$'s
under consideration, whereas the second parameter of $F$ indicates
the maximum number of $N_i$'s allowed in the expansion. It is
straightforward that for fixed $Q$ and $K$,
\begin{equation}
F(\infty, \infty)=F(\infty, \infty)-F(Q, \infty)+F(Q, \infty)-F(Q,
K)+F(Q, K).
\end{equation}

We first examine $F(\infty, \infty)-F(Q, \infty)$. For every
subset $N$ of $\L'$, define
\begin{eqnarray} u_N=\left\{\begin{array}{ll}
w_N & \mbox{if $|N|\leq Q$};\\
0 & \mbox{otherwise}.\end{array} \right.
\end{eqnarray}
The difference in $F$ can then be interpreted as induced by
evaluating (\ref{exp}) using two sets of parameters $w_N$ and
$u_N$. By (\ref{avoid}), these two parameter sets both lie in the
region of analyticity of (\ref{exp}), thus intuitively, the
difference can be as small as desired when $Q$ is large enough. In
fact, it is bounded by $|R\cup W'|M^{|R\cup W'|}\epsilon(Q)$ by
the mean value theorem, applied to (\ref{avoid}) and
(\ref{nfinal}). Fix such a $Q$. We next examine $F(Q, \infty)-F(Q,
K)$. This difference can be regarded as the tail of the convergent
series (\ref{exp}), thus should also be small when $K$ is large
enough. We again refer to (\ref{avoid}) and (\ref{nfinal}), and
conclude that it is bounded by $|R\cup W'|M^{|R\cup
W'|}\epsilon(K)$. Fix such a $K$. For these two situations, the
only thing left to show now is that
\begin{equation}
\sum_{|R|\leq |W|P}|\tilde{w}_R|
\end{equation}
is finite, but this naturally follows from (\ref{suc}).

Finally, we examine $F(Q, K)$. By (\ref{w}) and (\ref{alpha}),
$w_N$ only depends on image sites in $N$. As $R\cup W'$ and
$\cup_1^p N_i$ is block-connected, $F(Q, K)$ will only depend on
image sites in a finite region (roughly a ball with radius
$S(|W|P+QK)$). If $Z$ is outside this region, then
\begin{equation}
\sum_{|R|\leq |W|P}\tilde{w}_R F(Q, K)
\end{equation}
is a constant with respect to $\sigma'_Z$, thus, when summing over
all possible image configurations $\sigma'$ as in (\ref{small}),
it vanishes.
\end{proof}

\begin{proposition}
\label{pen} Suppose the original interaction $J$ is at high
temperature (cf. (\ref{Jr})). Suppose also that it is finite-range
and translation-invariant. Then for subset $W$ of the original
lattice and subset $Z$ of the image lattice, as the distance $l(W,
Z)$ between $W$ and $Z$ gets large, the partial derivative
$|\frac{\partial J'(Z)}{\partial J(W)}|$ decays sub-exponentially,
a little slower than $\exp(-l(W, Z)^{1/2})$.
\end{proposition}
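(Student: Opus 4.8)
The plan is to optimize the explicit bound (\ref{enough}) from the remark following Theorem \ref{band} over the free parameters $P$, $Q$, $K$, subject to the constraint $l(W,Z) > S(|W|P + QK)$ under which that bound is valid. Write $\rho = 2s||J||_r/c^2$, so that by (\ref{Jr}) we have $\rho \in (0,1)$, and (\ref{epsilon}) reads $\epsilon(P) = \frac{c}{1-\rho}\rho^{P/D}$; set $\lambda = \log(1/\rho) > 0$, so that $\epsilon(P)$ is comparable to $e^{-\lambda P/D}$. The prefactor $M^{D}(1+\log M)^{D}$ in (\ref{enough}) is a constant, and the bracket splits into a first term $\frac{\epsilon(P)}{\log M}$ that decays in $P$, and a second term $(\epsilon(Q)+\epsilon(K))(1+P)DM^{(1+P)D}$ that decays in $Q$ and $K$ but carries the factor $M^{(1+P)D}$ growing exponentially in $P$. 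The whole idea is to play these against each other.

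First I would tie $Q$ and $K$ to $P$. Setting $Q=K$ and requiring the second term to be at most the first amounts, after cancelling the common factor $\frac{c}{1-\rho}$, to $2\rho^{(Q-P)/D}(1+P)DM^{(1+P)D}\log M \leq 1$; taking logarithms, this holds as soon as $Q \geq \left(1+\frac{D^2\log M}{\lambda}\right)P + O(\log P)$. Accordingly I fix a constant $\kappa > 1+\frac{D^2\log M}{\lambda}$ and put $Q=K=\lceil \kappa P\rceil$. With this choice, for all sufficiently large $P$ the second term in (\ref{enough}) is dominated by the first, and the bound collapses to
\begin{equation}
\left|\frac{\partial J'(Z)}{\partial J(W)}\right| \leq C_1\, \rho^{P/D} = C_1\, e^{-\lambda P/D},
\end{equation}
for a constant $C_1 = M^{D}(1+\log M)^{D}\frac{2c}{(1-\rho)\log M}$ depending only on $M$, $c$, $s$, $||J||_r$, and $D$.

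It remains to choose $P$ as large as the geometric constraint permits. With $Q=K=\lceil\kappa P\rceil$, the validity condition $l(W,Z) > S(|W|P + QK)$ becomes, up to the rounding, $l(W,Z) > S(|W|P + \kappa^2 P^2)$, which is \emph{quadratic} in $P$. For large $l(W,Z)$ the quadratic term dominates, so I may take $P = \lfloor \eta\sqrt{l(W,Z)}\rfloor$ with $\eta = \frac{1}{2\kappa\sqrt{S}}$; then $\kappa^2 P^2$ and (for $l(W,Z)$ large, since $|W|$ is fixed) $|W|P$ are each at most $l(W,Z)/(4S)$, so the constraint is met. Substituting back gives
\begin{equation}
\left|\frac{\partial J'(Z)}{\partial J(W)}\right| \leq C_1\, e^{-\frac{\lambda\eta}{D}\sqrt{l(W,Z)}},
\end{equation}
which is the asserted sub-exponential decay at rate $\exp\bigl(-c\,l(W,Z)^{1/2}\bigr)$ with $c = \lambda\eta/D > 0$; since $c$ is in general smaller than $1$ and the rounding and the dropped polynomial factor in $P$ only weaken the exponent slightly, this is a little slower than $\exp(-l(W,Z)^{1/2})$, as claimed. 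The one genuinely delicate point is the balancing in the second paragraph: the factor $M^{(1+P)D}$ is precisely what forces $Q$ and $K$ to grow linearly in $P$, and it is this linear growth that turns the admissible constraint from linear into quadratic in $P$, thereby converting what would otherwise be exponential decay in $l(W,Z)$ into decay in $l(W,Z)^{1/2}$. Choosing $\kappa$ large enough for the balance while keeping $\eta$ compatible with the quadratic constraint is the only bookkeeping that requires care.
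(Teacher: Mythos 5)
Your proposal is correct, and it rests on the same strategy as the paper's proof: both arguments optimize the explicit bound (\ref{enough}) over the parameters $P$, $Q$, $K$ subject to the geometric constraint $l(W,Z)>S(|W|P+QK)$, and in both it is the quadratic term $QK$ in that constraint that converts exponential decay in $P$ into decay in $l(W,Z)^{1/2}$. Where you differ is in how the parameters are coupled. The paper takes $P=\frac{1}{|W|}(l/2S)^{\alpha}$ and $Q=K=(l/2S)^{\beta}$ with $0<\alpha<\beta\le 1/2$; the strict inequality $\alpha<\beta$ makes the second term of (\ref{enough}), which decays like $\exp(-\mathrm{const}\,l^{\beta}+\mathrm{const}\,l^{\alpha})$, automatically negligible against the first, and the conclusion is decay like $\exp(-\mathrm{const}\,l^{\alpha})$ for $\alpha$ strictly below $1/2$. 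You instead tie $Q=K=\lceil\kappa P\rceil$ with the constant $\kappa>1+D^{2}\log M/\lambda$ (in your notation $\rho=2s||J||_{r}/c^{2}$, $\lambda=\log(1/\rho)$) tuned so that $\epsilon(Q)$ beats the exponentially growing factor $M^{(1+P)D}$ at the level of exponential rates; this permits $P\sim\eta\sqrt{l}$ with $\eta=1/(2\kappa\sqrt{S})$ and yields the genuinely sharper bound $C_{1}\exp(-(\lambda\eta/D)\,l^{1/2})$, a true $\exp(-c\,l^{1/2})$ with $c>0$, rather than $\exp(-l^{\alpha})$ for every $\alpha<1/2$. This is asymptotically stronger than what the paper's parametrization produces and of course still implies Proposition \ref{pen}; the price is the explicit rate comparison in your second paragraph, which the paper's choice $\alpha<\beta$ lets it sidestep entirely. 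Your bookkeeping there checks out: the requirement $Q\ge\left(1+\frac{D^{2}\log M}{\lambda}\right)P+O(\log P)$ is exactly what the inequality $2\rho^{(Q-P)/D}(1+P)DM^{(1+P)D}\log M\le 1$ demands, and your choice of $\eta$ keeps $S(|W|P+QK)<l$ once $l$ is large enough that the rounding errors and the linear term $S|W|P$ are absorbed.
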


\begin{proof}
For notational convenience, we denote $l(W, Z)$ simply by $l$.
Take
\begin{equation}
P=\frac{1}{|W|}\left(\frac{l}{2S}\right)^{\alpha},
\end{equation}
and
\begin{equation}
Q=K=\left(\frac{l}{2S}\right)^{\beta},
\end{equation}
where $0<\alpha<\beta\leq 1/2$. We examine (\ref{enough}) closely.
The first factor,
\begin{equation*}
M^{D}\left(1+\log(M)\right)^{D},
\end{equation*}
is just a constant. The second factor is more complicated and thus
merits more attention. The first term, $\epsilon(P)/\log(M)$,
decays like $\exp\left(-l^{\alpha}\right)$, whereas the second
term,
\begin{equation*}
\left(\epsilon(Q)+\epsilon(K)\right)(1+P)DM^{(1+P)D},
\end{equation*}
decays like
$\exp\left(-l^{\beta}+l^{\alpha}\right)\sim\exp\left(-l^{\beta}\right)$.
Piecing it all together, $|\frac{\partial J'(Z)}{\partial J(W)}|$
decays sub-exponentially, like $\exp(-l^{\alpha})$.
\end{proof}

\section{Upper bound for the RG linearization}
\label{linearization}
\begin{definition}
For every subset $Z$ of the image lattice, the linearization
$\mathrm{L}(J)$ of the RG transformation for $J$ at high
temperature is given by a linear function of $K$ (which indicates
variation from infinite temperature),
\begin{equation}
\label{linear} \mathrm{L}(J)K(Z)=\sum_W \frac{\partial
J'(Z)}{\partial J(W)}K(W),
\end{equation}
where $W$ ranges over all finite subsets of the original lattice.
\end{definition}

\begin{proposition}
\label{ul} Consider finite-range and translation-invariant
Hamiltonians. Fix a subset $Z$ of the image lattice. Let $n(E)$ be
the number of subsets $W$ of the original lattice that are at most
$E$-distance away from $Z$,
\begin{equation}
n(E)=\#\{W: l(W, Z)\leq E\}.
\end{equation}
Then $n(E)$ grows polynomially in $E$.
\end{proposition}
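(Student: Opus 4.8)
The plan is to reduce the count to an elementary volume estimate on $\mathbb{Z}^d$, the only real content being supplied by the finite-range hypothesis. The crucial preliminary observation is that, although the statement literally ranges over all finite subsets $W$, only those relevant to the linearization (\ref{linear}) should be counted: since $J$ (and hence the admissible variation $K$) is finite-range, there is a constant $S$ with $J(W)=0$, and the partial derivative effectively irrelevant, whenever $\text{diam}(W)>S$. I would therefore read $n(E)$ as counting subsets $W$ with $\text{diam}(W)\leq S$ and $l(W,Z)\leq E$. Without such a restriction the count is trivially infinite, since an arbitrarily large $W$ possessing a single site near $Z$ would qualify; so making this reduction explicit is the conceptual heart of the argument.

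First I would translate the proximity condition into a containment statement. By the definition $l(W,Z)=\inf\{d'(w,z): w\in W',\ z\in Z\}$, the inequality $l(W,Z)\leq E$ says exactly that the image $W'$ contains at least one block $b$ with $d'(b,z)\leq E$ for some $z\in Z$. Since $Z$ is fixed and finite and $\L'$ inherits a metric $d'$ from the lattice $\mathbb{Z}^d$, the number of such anchor blocks grows polynomially: covering $\{b\in\L': d'(b,Z)\leq E\}$ by the $|Z|$ balls of radius $E$ centered at the sites of $Z$, each of cardinality $O(E^d)$, gives a bound of the form $|Z|\,CE^d$ for a constant $C=C(d)$.

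Next I would anchor each admissible $W$ to one of these blocks. If $l(W,Z)\leq E$ and $\text{diam}(W)\leq S$, choose a site $x\in W$ lying in an anchor block $b$; since $\text{diam}(W)\leq S$, all of $W$ lies in the ball $\{x'\in\L: d(x,x')\leq S\}$, whose cardinality is bounded by a constant $V=V(S,d)$. Hence $W$ is one of at most $2^{V}$ subsets of that ball. Summing these $\leq 2^{V}$ possibilities over the $O(E^d)$ anchor blocks, each carrying $s$ sites, yields $n(E)\leq s\,2^{V}|Z|\,CE^d=O(E^d)$, which is the claimed polynomial growth; any overcounting (a given $W$ may be anchored at several blocks) only sharpens the inequality.

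The estimate is routine once assembled, so I do not anticipate a genuine analytic obstacle. The one delicate point is the interpretive step in the first paragraph: pinning down that $n(E)$ is to range over finite-range $W$, and checking that passing between original-lattice diameters and image-lattice distances through the induced metric $d'$ preserves the $O(E^d)$ volume count. Once the bounded-diameter reduction is in place, the polynomial growth of balls in $\mathbb{Z}^d$ does all the work.
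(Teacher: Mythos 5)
Your proof is correct and takes essentially the same approach as the paper: the paper likewise bounds $n(E)$ by a uniform per-block count, $n=\sup_{y\in \L'}\#\{W: y\in W'\}<\infty$ (which, exactly as you make explicit, relies on implicitly restricting to bounded-diameter $W$ via the finite-range assumption), multiplied by the $O(E^d)$ volume of a ball of radius $E$. Your anchoring argument with the explicit constant $s\,2^{V}$ is simply an unpacked version of the paper's constant $n$, so the two proofs coincide in substance.
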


\begin{proof}
Due to our finite-range and translation-invariant assumptions on
the Hamiltonian,
\begin{equation}
n=\sup_{y\in \L'} \#\{W: y\in W'\}<\infty.
\end{equation}
Thus $n(E)$ grows at the same rate as the volume of a
$d$-dimensional ball with radius $E$, i.e., polynomial growth
$E^d$.
\end{proof}

\begin{theorem}
\label{lin} Suppose the original interaction $J$ is at high
temperature (cf. (\ref{Jr})). Suppose also that it is finite-range
and translation-invariant. Then the linearization $\mathrm{L}(J)$
of the RG transformation (\ref{linear}) is well-defined and has an
upper bound.
\end{theorem}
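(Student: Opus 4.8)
The plan is to show that the series defining $\mathrm{L}(J)K(Z)$ in (\ref{linear}) converges absolutely for each fixed $Z$, with a bound that is uniform in $Z$ and linear in an appropriate norm of $K$. The whole problem then reduces to the single estimate
\begin{equation}
|\mathrm{L}(J)K(Z)| \leq \sum_W \left|\frac{\partial J'(Z)}{\partial J(W)}\right| |K(W)| \leq \left(\sup_W |K(W)|\right) \sum_W \left|\frac{\partial J'(Z)}{\partial J(W)}\right|,
\end{equation}
so that it suffices to prove $\sum_W |\partial J'(Z)/\partial J(W)|$ is finite and bounded uniformly in $Z$. Here the finite-range and translation-invariant hypotheses are exactly what make the right-hand factors manageable: any variation $K$ is supported on $W$ with $\mathrm{diam}(W)$ bounded, so by the finite-body remark $|W| \leq D$, the pointwise bound (\ref{suc}) yields $|\partial J'(Z)/\partial J(W)| \leq M^{D}(1+\log M)^{D}$ uniformly, and $\sup_W |K(W)| \leq ||K||_r$ since $|K(W)|e^{r|W|} \leq ||K||_r$.

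First I would organize the sum over $W$ according to the image-lattice distance $l(W, Z)$. The two inputs are: (i) the uniform pointwise bound just recalled, valid for every $W$; and (ii) the sub-exponential decay of Proposition \ref{pen}, which gives that $|\partial J'(Z)/\partial J(W)|$ decays like $\exp(-l(W,Z)^{\alpha})$ for any $0 < \alpha < 1/2$ once $l(W,Z)$ is large. I would apply the pointwise bound on the finitely many $W$ close to $Z$ and the decay estimate on the remaining tail.

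Next I would invoke Proposition \ref{ul}: the count $n(E) = \#\{W : l(W,Z) \leq E\}$ grows only polynomially, like $E^{d}$, so the number of $W$ at distance about $E$ grows at most like $E^{d-1}$. Summing the decay estimate against this polynomial count gives, for a suitable constant $C'$,
\begin{equation}
\sum_W \left|\frac{\partial J'(Z)}{\partial J(W)}\right| \leq C' \sum_{E} E^{d-1} \exp\left(-E^{\alpha}\right),
\end{equation}
which converges because the sub-exponential decay dominates the polynomial growth. Translation invariance ensures that this bound is independent of $Z$, so setting $C = \sup_Z \sum_W |\partial J'(Z)/\partial J(W)|$ yields $\sup_Z |\mathrm{L}(J)K(Z)| \leq C \, ||K||_r$, which both establishes that each defining series converges (well-definedness) and exhibits $\mathrm{L}(J)$ as a bounded linear operator.

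I expect the main obstacle to be bookkeeping the comparison between the polynomial count of $W$'s and the sub-exponential decay in a way that is genuinely uniform in $Z$: one must confirm that the decay constant from Proposition \ref{pen} and the polynomial degree from Proposition \ref{ul} are both independent of $Z$ (which follows from translation invariance), and that the crossover distance beyond which the decay estimate takes over can be chosen uniformly in $Z$. A secondary point to pin down is the precise target norm for $\mathrm{L}(J)$: a supremum norm on image interactions makes the argument above immediate, whereas a weighted $\mathcal{B}_{r'}$-type norm would require also summing over $Z$ and absorbing a factor $e^{r'|Z|}$, which only changes the constant and not the structure of the estimate.
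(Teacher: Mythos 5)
Your proposal is correct and follows essentially the same route as the paper's proof: both reduce to absolute convergence of the series, organize the sum over $W$ into shells $n\leq l(W,Z)<n+1$, bound $|K(W)|$ by a supremum norm dominated by $||K||_r$, and invoke Propositions \ref{pen} and \ref{ul} to control everything by a series of the form $\sum_{n}(n+1)^{d}\exp(-n^{\alpha})$, which converges since sub-exponential decay dominates polynomial growth. Your added remarks (the explicit uniform pointwise bound $M^{D}(1+\log M)^{D}$ on the near region and the discussion of the target norm) are elaborations, not a different argument.
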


\begin{proof}
This is mainly due to the fact that sub-exponential decay
dominates polynomial growth. Take $K$ with $||K||_r$ small. As
$||K||_{\infty}\leq ||K||_r$, $||K||_{\infty}$ is small also. By
Propositions \ref{pen} and \ref{ul},
\begin{eqnarray}
|\mathrm{L}(J)K(Z)|&\leq& \sum_{n=0}^{\infty}\sum_{n\leq l(W, Z)<
n+1} \left|\frac{\partial J'(Z)}{\partial
J(W)}\right|\left|K(W)\right|\\ &\lesssim&
||K||_{\infty}\sum_{n=0}^{\infty}\exp(-n^{\alpha})(n+1)^{d}.
\end{eqnarray}
Our claim then follows from the integral test.
\end{proof}

\section*{Acknowledgments}
This work was in partial fulfillment of the requirements for the
PhD degree at the University of Arizona. The author owes deep
gratitude to her PhD advisor Bill Faris for his continued help and
support. She also thanks Tom Kennedy, Doug Pickrell, and Bob Sims
for their kind and helpful suggestions and comments.

\end{document}